\documentclass{article}
    
\usepackage{subfiles}
\usepackage{titling}  

\usepackage[utf8]{inputenc}
\usepackage[margin=1.5in]{geometry}
\usepackage{setspace}
\usepackage{amsmath,amsfonts,amstext}
\usepackage{amsthm}
\usepackage{graphicx}
\usepackage{booktabs}
\usepackage{multirow}
\usepackage{graphicx}
\usepackage[shortlabels]{enumitem}
\usepackage[hidelinks]{hyperref}
\usepackage[all]{hypcap}
\usepackage[numbers,sort&compress]{natbib}
\newcommand\Tstrut{\rule{0pt}{3ex}}         
\newcommand\Bstrut{\rule[-1.8ex]{0pt}{0pt}}   

\newcommand{\abs}[1]{\left\lvert #1 \right\rvert}
\newcommand{\N}{\mathbb{N}}
\newcommand{\RO}{\mathcal{R}_0}
\renewcommand{\P}{\mathbb{P}}

\renewcommand{\phi}{\varphi}
\renewcommand{\epsilon}{\varepsilon}
\DeclareRobustCommand{\stirling}{\genfrac\{\}{0pt}{}}
\DeclareMathOperator{\Pois}{Pois}
\DeclareMathOperator{\NB}{NB}

\newtheorem{theorem}{Theorem}

\newtheorem{lemma}[theorem]{Lemma}
\newtheorem{corollary}[theorem]{Corollary}

\newtheorem*{examples*}{Examples}
\newtheorem*{example*}{Example}

\providecommand{\customgenericname}{}
\newcommand{\newcustomtheorem}[2]{%
  \newenvironment{#1}[1]
  {%
   \ifdefined\crefalias\crefalias{innercustomgeneric}{#2}\fi
   \renewcommand\customgenericname{#2}%
   \renewcommand\theinnercustomgeneric{##1}%
   \innercustomgeneric
  }
  {\endinnercustomgeneric}%
  \ifdefined\crefname\crefname{#2}{#2}{#2s}\fi
}

\newcustomtheorem{customtheorem}{Theorem}
\newcustomtheorem{customlemma}{Lemma}

\theoremstyle{definition}

\newtheorem*{definition*}{Definition}

\newtheorem*{remark*}{Remark}

\begin{filecontents}{main2.tex}
\documentclass[main]{subfiles} 
\begin{document}

\title{
\vspace{2cm}
Tracking dynamics of superspreading through contacts, exposures, and transmissions in edge-based network epidemics
\vspace{1cm}}
\author{
Ari S. Freedman$^{1,2,3,*}$, Bjarke F. Nielsen$^{4,5,6}$, Maximillian M. Nguyen$^{7,8}$,\\Laurent H\'ebert-Dufresne$^{3,9,10}$, Simon A. Levin$^1$}
\date{
\small
$^1$Department of Ecology and Evolutionary Biology, Princeton University, Princeton, NJ, USA\\
$^2$Department of Plant Biology, University of Vermont, Burlington, VT, USA\\
$^3$Vermont Complex Systems Institute, University of Vermont, Burlington, VT, USA\\
$^4$High Meadows Environmental Institute, Princeton University, Princeton, NJ, USA\\
$^5$Niels Bohr Institute, University of Copenhagen, Copenhagen, Denmark\\
$^6$PandemiX Center, Roskilde University, Roskilde, Denmark\\
$^7$Lewis-Sigler Institute, Princeton University, Princeton, NJ, USA\\
$^8$School of Medicine, Emory University, Atlanta, GA, USA\\
$^9$Complexity Science Hub, Vienna, Austria\\
$^{10}$Santa Fe Institute, Santa Fe, NM, USA\\
$^*$Corresponding author; email: ari.freedman@uvm.edu
}

\onehalfspacing

\emergencystretch 3em

\maketitle

\vspace{3cm}

Keywords: superspreading, networks, disease modeling, dispersion, SARS-CoV-2

\newpage

\section*{Abstract}

Infectious disease superspreading caused by heterogeneity in contact behavior has been observed to be an important determinant of epidemic dynamics and size in both empirical and theoretical settings. However, it has also been observed that the importance of this type of superspreading changes throughout an epidemic, generally in a decreasing manner as infections cascade from individuals with many contacts to those with fewer contacts. We provide an exact mathematical formulation of this phenomenon in strongly-immunizing (SIR) epidemics on static contact networks. Building on the edge-based modeling framework, we construct three metrics to track how superspreading changes through the course of an epidemic, respectively measuring infected nodes' contacts, exposures, and transmissions: (1) the mean degree of infected nodes, (2) the mean number of susceptible neighbors of infected nodes, and (3) the mean number of secondary cases that will be caused by newly infected nodes.
We prove results about the behaviors of these metrics, highlighting the fact that their peak times all occur at less than half the time it takes for population-level infection prevalence to peak. This suggests that the importance of superspreading will be low when an epidemic is already near its peak, so contact-based control strategies are best employed as early in an outbreak as possible. We discuss implications for accurately measuring epidemiological parameters from incidence, mobility, contact tracing, and transmission data.

\section*{Introduction}
Infectious disease models traditionally assume individuals in a population are well-mixed in their contact patterns, assuming mass-action transmissions where incidence is proportional to both the number of infected individuals and the number of susceptible individuals \cite{KermackMcKendrick1927,Heesterbeek2005}. However, human populations are highly structured in their contact patterns, a phenomenon which models often capture abstractly using contact networks \cite{Klovdahl1985,WattsStrogatz1998,Newman2002,EubankEtAl2004,BalcanEtAl2009}. Much work has been done on mathematically describing the spread of infectious disease and other spreading process on networks using a variety of methods, including branching process \cite{LloydSmithEtAl2005}, moment closure \cite{EamesKeeling2002,Bauch2002}, degree block \cite{BarratEtAl2013}, approximate master equation \cite{MarceauEtAl2010, HebertDufresneEtAl2010,Gleeson2013}, effective degree \cite{BallNeal2008,LindquistEtAl2011}, message passing \cite{KarrerNewman2010,SherborneEtAl2018}, generation-based \cite{NoelEtAl2009} and edge-based approaches \cite{Volz2008,Miller2011}. Particular emphasis has been placed on models with perfect immunity, adapting the original SIR model of Kermack and McKendrick \cite{KermackMcKendrick1927} to a static contact network with an arbitrary degree distribution. Of these, Volz \cite{Volz2008} achieved the first low-dimensional mathematical description of a SIR epidemic evolving over a network in continuous time, an approach further simplified by Miller \cite{Miller2011} down to a system of two ordinary differential equations. 

A motivating force behind this proliferation of network infectious disease models has been the increasingly recognized significance of superspreading in epidemics \cite{Klovdahl1985,LloydSmithEtAl2005,BrainardEtAl2023}, where a small group of individuals are responsible for a majority of transmissions (during the SARS-CoV-2 pandemic, 10\% of cases caused as much as 80\% of infections \cite{MillerEtAl2020, EndoEtAl2020, LauEtAl2020, AlthouseEtAl2020}). Superspreading events can be driven by various factors, from heterogeneity in biological factors like viral shedding volume to behavioral factors such as the size of one's social contact network \cite{NielsenEtAl2023, GoyalEtAl2021, AlthouseEtAl2020}. Here, we focus on contact heterogeneity, which has been shown to have important effects on epidemic dynamics \cite{LloydSmithEtAl2005,Volz2008,MillerEtAl2012}, critical outbreak threshold \cite{Andersson1997,MayLloyd2001,Newman2002,PastorSatorrasVespignani2002,MillerEtAl2012}, final size \cite{Miller2011,GrossmannEtAl2021,NoelEtAl2009}, herd immunity threshold \cite{BrittonEtAl2020,GomesEtAl2022,OzEtAl2021}, extinction probabilities \cite{LloydSmithEtAl2005,DiekmannEtAl2012}, evolutionary potential \cite{LeventhalEtAl2015}, and effectiveness of control efforts \cite{LloydSmithEtAl2005,KissEtAl2005,NielsenEtAl2023}.

Another previously described phenomenon of epidemic spreading on heterogeneous networks is the propensity for high degree nodes (representing individuals with many contacts) to be more likely to become infected and to do so earlier, as they are the ones most likely to be connected by a random edge (or contact) in the network \cite{AllardEtAl2023,BarthelemyEtAl2005}.
This is the same logic underlying the ``friendship paradox'', that a random friend of an individual will on average have more friends than that individual \cite{AllardEtAl2023}. Furthermore, if the infection has long-lasting immunity, then the infection will tend to cascade from high degree nodes to low degree nodes as the high degree nodes become infected first but then also recover and gain immunity first \cite{BarthelemyEtAl2005}. Thus, the potential for superspreading in an epidemic may decline over time. Statistical approaches have been developed to estimate the varying role of superspreading during the SARS-CoV-2 pandemic \cite{MiyamaEtAl2022,GuoEtAl2023}. However, the literature has lacked a rigorous analytic exploration of how superspreading potential changes over time in a network epidemic by tracking the degrees of infected nodes over time.

In this work, we define three metrics of superspreading potential, measuring (1) the average number of contacts that infected nodes have, (2) the average number of those contacts which are susceptible, and (3) the average number of those susceptible contacts to which the infected node transmits the contagion. These metrics capture how the numbers of contacts, exposures, and transmissions associated with infected nodes change over time. We derive insights into how these metrics evolve in a continuous-time SIR epidemic over a static contact network with given degree distribution $K$, building on Miller and Volz's edge-based model \cite{Miller2011,MillerEtAl2012}.

Lastly, we discuss how these metrics are involved in inference methods for key epidemiological parameters from various sources of epidemic data. Importantly, our results show that inferred epidemiological parameters may differ greatly depending on the type of data available and how they are temporally aggregated.

\subsection*{Overview of the superspreading metrics and analytic results}

We formally define three metrics of superspreading potential, each of which is the mean of a distribution related to the properties of infected nodes at a given time. First, we define the \textit{infected degree distribution} $X(t)$ as the degree distribution of infected nodes at time $t$, which has mean $m(t)$, variance $v(t)$, moments $m_n(t)$, and mass function $p_k(t)$. Assuming that the initially seeded infected nodes are randomly chosen, the infected degree distribution will start out the same as the overall degree distribution of the network, $X(0)=K$ and $m(0)=\mu$. As we will show, $m(t)$ first increases as $X(t)$ approaches the ``neighbor degree distribution'' $K_{\textnormal{n}}$ (degree distribution of random neighbors) due to high degree nodes getting primarily infected first, after which $m(t)$ declines as high degree nodes recover and the infection moves to low degree nodes.

\begin{table}[!b]
\centering{
\begin{tabular}{cccccc}
\hline
Distribution & rv & pmf & mean & variance & $n$-th moment\\
\hline
Degree & $K$ & $P(k)$ & $\mu$ & $\nu$ & $\mu_n$\\
Infected degree & $X(t)$ & $p_k(t)$ & $m(t)$ & $v(t)$ & $m_n(t)$\\
Effective degree & $E(t)$ & $p_{E,k}(t)$ & $m_E(t)$ & $v_E(t)$ & $m_{E,n}(t)$\\
Secondary case & $Z(t)$ & $p_{Z,k}(t)$ & $m_Z(t)$ & $v_Z(t)$ & $m_{Z,n}(t)$\\
\hline
\end{tabular}}
\caption{Notation for the random variables (rv), probability mass functions (pmf), means, variances and moments associated with network degree, infected degree, effective degree, and secondary case distributions.}
\label{table:notation}
\end{table}

Second, we define the \textit{effective degree distribution} $E(t)$ as the distribution of the number of susceptible neighbors each infected node has at time $t$ (named following Ref. \cite{LindquistEtAl2011}), which has mean $m_E(t)$, variance $v_E(t)$, moments $m_{E,n}(t)$, and mass function $p_{E,k}(t)$. This distribution captures the idea that while an infected node may have a large number of neighbors, its ability to transmit to many neighbors depends on it having a large number of susceptible neighbors, as these are the ones that it can actually infect. Naturally, $m_E(t)$ is always less than $m(t)$, but it still starts at approximately $m_E(0)\approx\mu$ with the population starting out mostly susceptible. We show that $m_E(t)$ also has an initial increase in some parameter regimes, but not in others, followed also by a steady decline as susceptibility decreases.

\begin{table}[!b]
\centering{
\begin{tabular}{ccc}
\hline
\multirow{2}{*}{Variable} & \multirow{2}{*}{Equation} & Peak value \\[-.5mm]
& &  \ (with $\theta(0)\approx1$)\\
\hline
$m(t)$ & $\dot m=-\frac{J}{I}\big(m-\frac{\phi''(\log\theta)}{\phi'(\log\theta)}\big)$ & $\mu+\frac\nu\mu$ \Tstrut\Bstrut\\
$m_E(t)$ & $\dot m_E=-\big(\frac{J}{I}+\beta-\frac{\psi''(\theta)\dot\theta}{\psi'(\theta)}\big)m_E+\frac JI\frac{\psi''(\theta)\theta}{\psi'(1)}$ & $\max\{\mu+\frac\nu\mu-2,\mu\}$ \Tstrut\Bstrut\\
$m_Z(t)$ & $m_Z(t)=\beta\frac{\psi''(\theta(t))}{\psi'(\theta(t))}\int_t^\infty e^{-(\beta+\gamma)(\tau-t)}\frac{\psi'(\theta(\tau))}{\psi'(1)}d\tau$ & $<\RO$ (if $\psi'$ log-convex)\Tstrut\Bstrut\\
\hline
\end{tabular}
}
\resizebox{\linewidth}{!}{
\begin{tabular}{cccc}
\hline
\multirow{2}{*}{Variable} & Peak time & \multicolumn{2}{c}{Value in the limit $t\to\infty$}\\[-.5mm]
& \ (with $\theta(0)\approx1$) & \ if $\psi''(\theta(\infty))>\mu$ & if $\psi''(\theta(\infty))\le\mu$\\
\hline
$m(t)$ & $t_m<\big(\frac12-\frac\gamma{4(\beta+\gamma)(\RO-1)+2\gamma}\big)t_I$ & $\ \frac{\phi''(\log\theta(\infty))}{\phi'(\log\theta(\infty))}$ & $>\frac{\phi''(\log\theta(\infty))}{\phi'(\log\theta(\infty))}$ \Tstrut\Bstrut\\
$m_E(t)$ & $0$ if $\frac\nu\mu>2$, else $>0$ and $<t_m$ & $\ \theta(\infty)\big(\frac{\psi''(\theta(\infty))}{\mu}-1\big)<\frac\gamma\beta$ & 0 \Tstrut\Bstrut\\
$m_Z(t)$ & 0 (if $\psi'$ log-convex) & \multicolumn{2}{c}{$=\frac\beta{\beta+\gamma}\frac{\psi''(\theta(\infty))}{\mu}<1$} \Tstrut\Bstrut\\
\hline
\end{tabular}
}
\caption{Summary of the analytic results derived in this paper concerning three metrics of superspreading at time $t$ we define: the mean $m(t)$ of the infected degree distribution, the mean $m_E(t)$ of the effective degree distribution, and the mean $m_Z(t)$ of the secondary case distribution. Here, $\beta$ is the transmission rate; $\gamma$ is the recovery rate; $I$ is the infection prevalence; $J$ is the instantaneous incidence; $\theta$ is the probability an edge has not yet transmitted infection; $\mathcal{R}_0$ is the basic reproduction number; $t_I$ is the peak time of $I$; and $\psi$ and $\phi$ are the probability- and moment-generating functions, respectively, of the network's degree distribution $K$.}
\label{table:summary}
\end{table}

Third and lastly, we define the \textit{secondary case distribution} $Z(t)$ as the distribution of the number of secondary cases each newly infected node at time $t$ will cause during its infectious period (named following Ref. \cite{LloydSmithEtAl2005}), which has mean $m_Z(t)$, variance $v_Z(t)$, moments $m_{Z,n}(t)$, and mass function $p_{Z,k}(t)$. While this is perhaps the most direct way of characterizing superspreading potential over time, it is also the least analytically tractable. However, we are still able to show that $m_Z(t)$ decreases monotonically at all times under certain assumptions, with no initial increase as with $m(t)$ or $m_E(t)$. This difference occurs because $Z(t)$ is concerned only with \textit{newly} infected nodes at time $t$, whose degrees follow the neighbor degree distribution $K_{\textnormal{n}}$ for small $t$. Conversely, $X(t)$ and $E(t)$ are concerned with all nodes currently infected at $t$, which for small $t$ will still include some of the initial infections whose degrees follow the overall degree distribution $K$.

With these distributions rigorously defined and equations derived for them, we then investigate their means (and higher moments when possible), peak times, and limiting behavior. The main analytic results are summarized in Table~\ref{table:summary}. Notably, we introduce the concept of the \textit{superspreading peak}, defined as the peak of $m(t)$, at the superspreading peak time $t_m$ when $m(t)$ is maximized and the potential for superspreading could be considered to be at its highest. Interestingly, we show that the superspreading peak time $t_m$ is less than half the peak time $t_I$ of prevalence $I(t)$, while $m_E(t)$ peaks before $m(t)$ (and $m_Z(t)$ peaks even earlier at $t=0$). Thus, the potential for superspreading declines much earlier than the epidemic's peak, meaning that much of the impact of superspreading may have already occurred by the time an outbreak has even reached epidemic proportions and become a serious threat.

\subsection*{Overview of the edge-based network epidemic model}

Our analyses are based on Miller and Volz's edge-based model \cite{Miller2011,MillerEtAl2012} for its simplicity, which we now briefly summarize. Miller and Volz describe a SIR epidemic in continuous time on a configuration network \cite{Newman2018} as 
\begin{align}
\dot\theta&=-\beta\theta+\beta\frac{\psi'(\theta)}{\psi'(1)}+\gamma(1-\theta) \label{eq:dtheta}\\
S&=\psi(\theta) \label{eq:S}\\
I&=1-S-R \label{eq:I}\\
\dot R&=\gamma I, \label{eq:dR}
\end{align}
with $S$, $I$, and $R$ representing the fraction of nodes susceptible, infected, and recovered, respectively ($S(t)+I(t)+R(t)=1$), $\beta$ is the transmission rate, $\gamma$ the recovery rate, and
\begin{equation}
\psi(x)=\langle x^K\rangle=\sum\limits_{k=0}^\infty P(k)x^k
\label{eq:psi}
\end{equation}
is the probability-generating function for the network's degree distribution $K$ with probability mass function $P(k)$. The variable $0<\theta(t)<1$ represents the probability that an arbitrary neighbor of a focal susceptible node has not yet passed infection to that node by time $t$ (either because the neighbor was never infected or the neighbor was infected but never transmitted along their connecting edge), which starts out at $\theta(0)\approx 1$ as we assume initial prevalence is small. Miller shows that an outbreak will occur if and only if \begin{equation}
\RO = \frac{\beta}{\beta+\gamma}\frac{\psi''(1)}{\psi'(1)}>1,
\label{eq:R0}
\end{equation}
and then $\theta(t)$ will always be monotonically decreasing towards a final value $\theta(\infty)$. Since $\dot\theta(\infty)=0$, the quantity $\theta(\infty)$ must satisfy
\begin{equation}
\label{eq:theta_infty}
\theta(\infty)=\frac{\gamma}{\beta+\gamma}+\frac{\beta}{\beta+\gamma}\frac{\psi'(\theta(\infty))}{\psi'(1)}.
\end{equation}
From now on, $\theta(\infty)$ will refer to the unique solution to Eq.~\eqref{eq:theta_infty} between 0 and 1 exclusive, which Miller shows must exist. The function $\psi$ also has many useful properties: in this model, susceptibility is simply $S=\psi(\theta)$, while in general, $\psi'(1)=\langle K\rangle=\mu$ and $\psi''(1)=\langle K(K-1)\rangle=\mu_2-\mu=\nu+\mu^2-\mu$, where $\mu_n$ the $n$-th moment of $K$, $\mu=\mu_1$ is its mean, and $\nu$ its variance (these notations are summarized in Table~\ref{table:notation}). It will also be helpful to define the instantaneous incidence of new infectious
\begin{equation}
J=-\dot S=-\psi'(\theta)\dot\theta,
\label{eq:J}
\end{equation}
and the moment-generating function of $K$
\begin{equation}
\phi(y)=\langle e^{yK}\rangle=\sum\limits_{k=0}^\infty P(k)e^{yk}
\label{eq:phi}
\end{equation}
satisfying $\phi(\log x)=\psi(x)$ and $\phi^{(n)}(0)=\mu_n$.

\subsection*{Assumptions}

Our results apply to any $\RO>1$, while we ignore the case of $\RO<1$ as this precludes the possibility of an outbreak occurring. We assume that the initial fraction of infected nodes is very small, which corresponds to an initial value of $\theta(0)$ very close to 1, as many of our results depend on this small initial infection limit. We also assume that at least the first three moments of $K$ exist (excluding some power-law degree distributions from this analysis) so that $\psi'$, $\psi''$, $\psi'''$ and $\phi'$, $\phi''$, $\phi'''$ are all well defined. We also assume that the network is static, large, and created by the configuration model, so that the network has negligible self-loops, multi-edges, degree correlations, clustering, and modularity \cite{Newman2018}.

\subsection*{Example distributions and simulations}

As examples, we investigate epidemics on networks with three different degree distributions: (1) $K$ is Poisson distributed, with no dispersion; (2) $K$ is negative-binomially distributed with low dispersion (dispersion parameter $r=2.5$); and (3) $K$ is negative-binomially distributed with high dispersion ($r=.5$). In Fig.~\ref{fig:Is}, we show what the epidemic trajectories look like for each of these three network degree distributions. We employ negative binomial distributions (and Poisson distributions in the limiting case of no dispersion) as they have been found to be good fits to empirical superspreading event \cite{LloydSmithEtAl2005} and allow exact edge-based description \cite{KissEtAl2023}. For all three, we fix the mean degree $\mu=5$ and the disease's basic reproduction number $\RO=3$. From there, $\beta$ can be calculated from Eq.~\eqref{eq:R0}, while $\gamma$ can be thought of as a time-scale constant and set arbitrarily to $1$ (we do so and thus refer to $t$ as ``$\gamma$-scaled time'').

We construct networks for each of these three degree distributions and with 1 million nodes each using the configuration model and removing self-loops and multi-edges. On each of these three networks, we use code originally written for the ``Epidemics on Networks'' Python package \cite{MillerTing2019} to run 200 simulations, each starting with a random set of 100 infections out of the 1 million nodes.

\begin{figure}[!t]
\centering
\includegraphics[width=\textwidth]{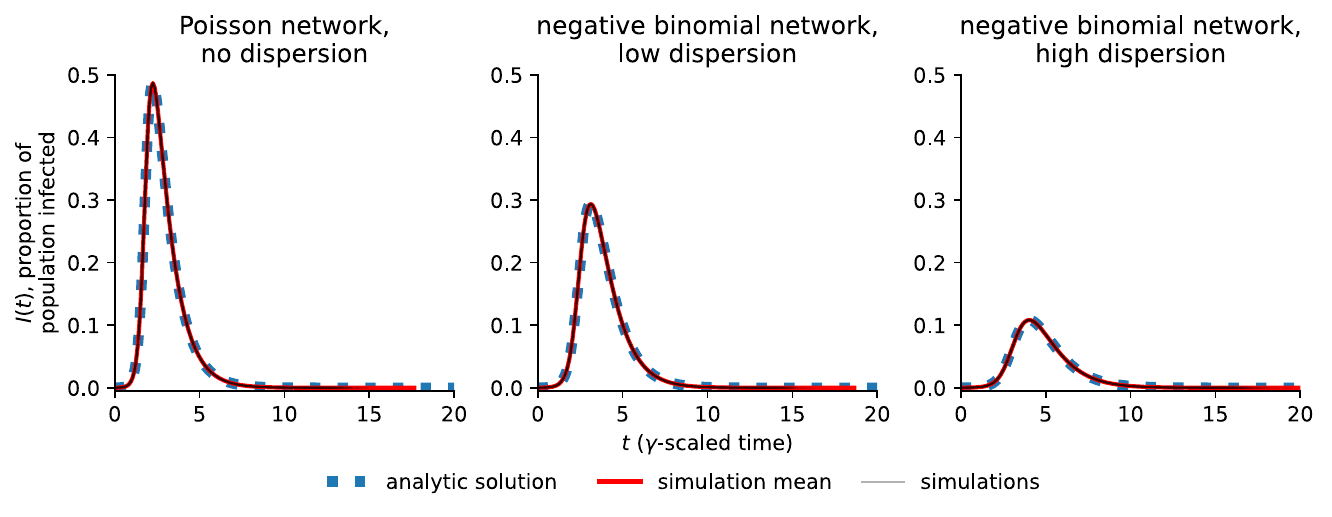}
\caption{Trajectories of infection prevalence in the three networks we use as examples, each with 1 million nodes and mean degree 5: a Poisson network with no dispersion, a negative-binomial network with low dispersion (dispersion parameter $r=2.5$), and a negative-binomial network with high dispersion ($r=.5$). Red curves show the mean of 200 simulations (whose individual trajectories are also plotted by faint gray curves which coincide with the mean curve) while the blue dotted curves show the analytic solution provided by the edge-based model Eq.~\eqref{eq:dtheta}--Eq.~\eqref{eq:dR} \cite{Miller2011}. We scale time $t$ by the infection recovery rate $\gamma$.}
\label{fig:Is}
\end{figure}

\section*{The infected degree distribution \texorpdfstring{$X(t)$}{X(t)}}

To derive the infected degree distribution, it will be helpful to have equations for how overall prevalence $I(t)$ changes over time as well as $I_k(t)$, which we define to be the prevalence of infection among nodes of degree $k$. The first of these follows immediately from differentiating Eq.~\eqref{eq:I} to get
\begin{equation}
\dot I=-\psi'(\theta)\dot\theta-\gamma I.
\label{eq:dI}
\end{equation}
For degree $k$ nodes, and for each of their $k$ neighbors, $\theta$ is the probability that the neighbor has not yet passed infection to the node, so that a degree $k$ node will be susceptible with probability $\theta^k$. From this logic, we get an analog to the original model but for degree $k$ nodes only:
\begin{align}
S_k&=\theta^k \label{eq:Sk}\\
I_k&=1-S_k-R_k \label{eq:Ik}\\
\dot R_k&=\gamma I_k \label{eq:dRk}
\end{align}
and
\begin{equation}
\dot I_k=-k\theta^{k-1}\dot\theta-\gamma I_k,
\label{eq:dIk}
\end{equation}
with the dynamics of $\theta$ still described by Eq.~\eqref{eq:dtheta}.

By Bayes' law, the probability that an infected node is of degree $k$ equals
\begin{equation}
p_k=\frac{P(k)I_k}{I},
\label{eq:ik}
\end{equation}
which we can differentiate to get $\dot p_k=P(k)\frac{I\dot I_k-I_k\dot I}{I^2}$, simplifying to
\begin{equation}
\dot p_k=-\frac{J}{I}\left(p_k-\frac{P(k)k\theta^{k-1}}{\psi'(\theta)}\right),
\label{eq:dpk}
\end{equation}
where again $J=-\psi'(\theta)\dot\theta$ is the instantaneous incidence of new infections. The $p_k$ represent the mass function for the infected degree distribution $X(t)$, which starts out equal to the network degree distribution so that $p_k(0)=P(k)$.

Alternately, we can express $\dot p_k$ in terms of the degree distribution's moment-generating function $\phi$, which instead yields
\begin{equation}
\dot p_k=-\frac{J}{I}\left(p_k-\frac{P(k)k\theta^k}{\phi'(\log\theta)}\right).
\label{eq:dpk2}
\end{equation}
From this formulation, succinct differential equations for the infected degree distribution's moments easily follow.
\begin{theorem}
\label{thm:dmn}
Assume $\mu_{n+1}$, the $(n+1)$-th moment of the network degree distribution, exists. Then the $n$-th moment of the infected degree distribution $X(t)$ exists and satisfies
\begin{equation}
\dot m_n=-\frac{J}{I}\left(m_n-\frac{\phi^{(n+1)}(\log\theta)}{\phi'(\log\theta)}\right),
\label{eq:dmn}
\end{equation}
with initial value of $m_n$ given by
\begin{equation}
m_n(0)=\mu_n.
\label{eq:mn_init}
\end{equation}
\end{theorem}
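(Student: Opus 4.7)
The plan is to derive Eq.~\eqref{eq:dmn} by multiplying Eq.~\eqref{eq:dpk2} by $k^n$ and summing over $k \ge 0$. The moment-generating-function version of the $\dot p_k$ equation is tailor-made for this operation: rewriting $\theta^k = e^{k\log\theta}$ reveals that $\sum_k k^{n+1}P(k)\theta^k$ is exactly $\phi^{(n+1)}(\log\theta)$, obtained by differentiating the defining series $\phi(y)=\sum_k P(k)e^{yk}$ term-by-term $n+1$ times and evaluating at $y=\log\theta$. This is precisely the reason the formulation of Eq.~\eqref{eq:dpk2} was written in terms of $\phi$ rather than $\psi$.

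Carrying this out, I would first confirm that $m_n(t)$ exists whenever $I(t)>0$: since $p_k = P(k) I_k/I \le P(k)/I$, we get $m_n \le \mu_n/I$, and $\mu_n \le \mu_{n+1} < \infty$ by hypothesis. Then multiplying Eq.~\eqref{eq:dpk2} by $k^n$ and summing produces $\sum_k k^n \dot p_k = -(J/I)\bigl(m_n - [\phi'(\log\theta)]^{-1}\sum_k k^{n+1}P(k)e^{k\log\theta}\bigr)$, which collapses to the right-hand side of Eq.~\eqref{eq:dmn} via the identification above. The initial condition $m_n(0) = \mu_n$ is immediate from $p_k(0) = P(k)$, already noted in the paragraph preceding the theorem.

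The main obstacle is justifying the two interchanges of summation with a limit operation: $\frac{d}{dt}\sum_k k^n p_k = \sum_k k^n \dot p_k$, and the termwise identification $\sum_k k^{n+1}P(k)\theta^k = \phi^{(n+1)}(\log\theta)$. For the latter, since $\log\theta \le 0$ the termwise bound $k^{n+1}P(k)e^{k\log\theta} \le k^{n+1}P(k)$ has summable majorant $\mu_{n+1}$, so the Weierstrass $M$-test legitimizes differentiating the power series for $\phi$ under the sum up to order $n+1$. For the time derivative, I would apply dominated convergence on any compact time interval on which $I$ stays bounded below: from Eq.~\eqref{eq:dpk2}, $|k^n \dot p_k| \le (J/I)\bigl(k^n p_k + k^{n+1}P(k)\theta^k/\phi'(\log\theta)\bigr)$, and since $\theta$ decreases monotonically toward $\theta(\infty)>0$, we have $\phi'(\log\theta) \ge \phi'(\log\theta(\infty)) > 0$, so both pieces are dominated by a constant multiple of $k^{n+1}P(k)$, which is summable by hypothesis. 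This is exactly why the statement assumes existence of $\mu_{n+1}$ rather than just $\mu_n$: one extra moment is needed to control the generating-function term and the interchange it requires.
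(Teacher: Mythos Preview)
Your proposal is correct and follows essentially the same route as the paper: multiply Eq.~\eqref{eq:dpk2} by $k^n$, sum over $k$, and identify $\sum_k k^{n+1}P(k)\theta^k=\phi^{(n+1)}(\log\theta)$, with the initial condition coming from $p_k(0)=P(k)$. The paper's proof is in fact briefer than yours---it simply writes $\dot m_n=\sum_k k^n\dot p_k$ and bounds $\phi^{(j)}(\log\theta)\le\mu_{n+1}$ without explicitly invoking dominated convergence or the $M$-test---so your additional care in justifying the two interchanges is a welcome elaboration rather than a deviation.
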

\begin{proof}
Note that, since $\mu_{n+1}$ exists, $\phi^{(j)}(\log\theta)$ will also exist for any $j\le n+1$ and $0<\theta<1$, as $\phi^{(j)}(\log\theta)=\langle K^j\theta^K\rangle\le\langle K^{n+1}\rangle=\mu_{n+1}$. Eq.~\eqref{eq:dmn} then follows from Eq.~\eqref{eq:dpk2} as
\begin{align*}
\dot m_n &= \sum_{k=0}^\infty \dot p_k k^n\\
&= -\frac{J}{I}\left(\sum_{k=0}^\infty p_kk^n-\sum_{k=0}^\infty\frac{P(k)k^{n+1}\theta^k}{\phi'(\log\theta)}\right)\\
&= -\frac{J}{I}\left(m_n-\frac{\phi^{(n+1)}(\log\theta)}{\phi'(\log\theta)}\right).
\end{align*}

And since the initial infected degree distribution and the network degree distribution are equal, so are their moments, such that $m_n(0)=\mu_n$.
\end{proof}

\begin{figure}[!t]
\centering
\includegraphics[width=\textwidth]{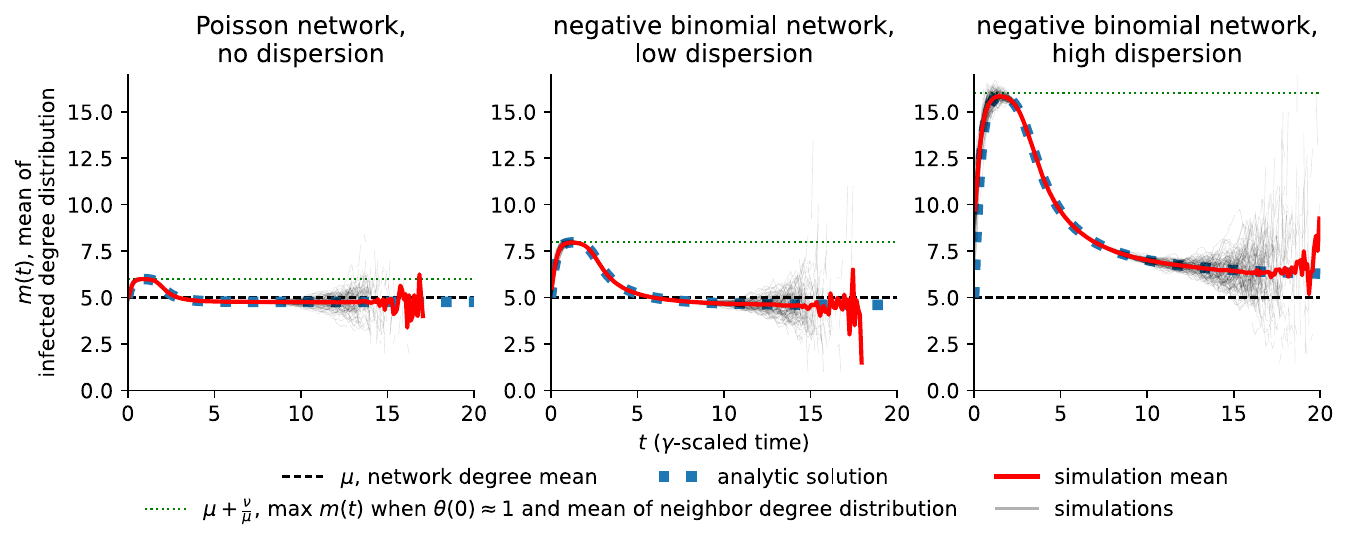}
\caption{Trajectories of the infected degree distribution's mean $m(t)$ for three different network degree distributions. Red curves show the mean of 200 simulations (whose individual trajectories are also plotted by faint gray curves) while the blue dotted curves show the analytic solution from Eq.~\eqref{eq:dm}. Black dashed lines show the mean $\mu$ of the network degree distribution and green dotted lines show the mean $\mu+\frac\nu\mu$ of the neighbor degree distribution.}
\label{fig:ms}
\end{figure}

Intuitively, Eq.~\eqref{eq:dmn} says that $m_n(t)$ is continuously being attracted to the moving target $\frac{\phi^{(n+1)}(\log\theta(t))}{\phi'(\log\theta(t))}$ at a rate proportional to $\frac{J(t)}{I(t)}$, which we call the \textit{infection turnover rate}. This is the relative rate at which current infections are replaced with new ones, thus also controlling the rate at which the degree distribution of infected nodes changes. Before continuing, it will be helpful to establish the behavior of this infection turnover rate, with a lemma we prove in the supplement.
\begin{lemma}
\label{lemma:J_I}
The following results are true for the infection turnover rate $\frac{J(t)}{I(t)}$ when $\theta(0)$ is sufficiently close to 1:
\begin{enumerate}[(1)]
\item $\frac{J(t)}{I(t)}>\gamma$ while $\dot J(t)\ge0$.

\item $\frac{J(t)}{I(t)}<\beta\left(\frac{\psi''(1)}{\psi'(1)}-1\right)$ always, and $\frac{J(t)}{I(t)}$ can be made to stay arbitrarily close to $\beta\left(\frac{\psi''(1)}{\psi'(1)}-1\right)$ for an arbitrary amount of time while $\theta(t)\approx1$ by choosing $\theta(0)$ sufficiently close to 1.

\item $\lim\limits_{t\to\infty}\frac{J(t)}{I(t)}=\max\left\{\beta\left(\frac{\psi''(\theta(\infty))}{\psi'(1)}-1\right),0\right\}$.
\end{enumerate}
\end{lemma}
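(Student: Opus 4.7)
The plan is to handle the three parts with three different techniques: an integrating-factor identity for (1), a differential inequality combined with an integral-equation comparison for (2), and linearization around the epidemic endpoint for (3). A single calculation underpins all three: differentiating $J = -\psi'(\theta)\dot\theta$ and eliminating $\ddot\theta$ using Eq.~\eqref{eq:dtheta} gives
\[
\dot J \;=\; -\psi''(\theta)\dot\theta^2 \;+\; \lambda(\theta)\,J,
\qquad
\lambda(\theta) := \beta\bigl(\psi''(\theta)/\psi'(1)-1\bigr)-\gamma,
\]
with $\lambda(1) =: \lambda = (\beta+\gamma)(\RO-1)$ the growth rate of the linearization of Eq.~\eqref{eq:dtheta} about $\theta = 1$.

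For part (1), the identity $\dot I = J - \gamma I$ recasts the claim $J/I > \gamma$ as $\dot I > 0$. Differentiating once more yields $(e^{\gamma t}\dot I)' = e^{\gamma t}\dot J$, so integrating from $0$ to $t$,
\[
e^{\gamma t}\dot I(t) \;=\; \dot I(0) \;+\; \int_0^t e^{\gamma s}\dot J(s)\,ds.
\]
Whenever $\dot J \ge 0$ on $[0,t]$ the integral is non-negative, and the claim reduces to $\dot I(0) > 0$, which is delivered by the exponential-growth regime in force when $\theta(0) \approx 1$.

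For part (2), monotonicity of $\psi''$ (a consequence of $\psi''' \ge 0$) makes $\lambda(\theta) \le \lambda$, so the displayed formula for $\dot J$ gives $\dot J \le \lambda J$; equivalently, $Q(t) := J(t)e^{-\lambda t}$ is non-increasing. The Duhamel formula for $\dot I = J - \gamma I$, multiplied by $e^{-\lambda t}$, reads
\[
P(t) := I(t)e^{-\lambda t} \;=\; I(0)e^{-(\lambda+\gamma)t} \;+\; \int_0^t e^{-(\lambda+\gamma)(t-s)}Q(s)\,ds,
\]
and bounding the integrand below by $Q(t)$ for $s \le t$ produces
\[
(\lambda+\gamma)P(t) - Q(t) \;\ge\; e^{-(\lambda+\gamma)t}\bigl[(\lambda+\gamma)I(0)-J(0)\bigr],
\]
strictly positive since $J(0)/I(0) \approx \lambda < \lambda + \gamma$ in the small-initial-infection limit; multiplying back by $e^{\lambda t}$ yields the upper bound. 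For the ``arbitrarily close'' half of (2), I would note that on the exponential-growth phase $J/I \to \lambda + \gamma$, and that phase can be extended arbitrarily long by shrinking $1 - \theta(0)$, delaying the onset of the nonlinear terms in Eq.~\eqref{eq:dtheta}.

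For part (3), linearize the $(\theta,R)$ system about $(\theta(\infty), R(\infty))$ by setting $w = \theta - \theta(\infty)$, $s = S - S(\infty) = \psi'(\theta(\infty))\,w + O(w^2)$, and $r = R - R(\infty)$. Then $\dot w \approx \alpha w$ with $\alpha := \lambda(\theta(\infty))$, and asymptotic stability of $\theta(\infty)$ forces $\alpha \le 0$. Since $J = -\dot s \sim -\alpha s$ while $I = -s - r$ with $\dot r + \gamma r = -\gamma s$, the long-time ratio is controlled by which of the forced mode $e^{\alpha t}$ or the homogeneous mode $e^{-\gamma t}$ dominates $r$: if $\alpha + \gamma > 0$ the forced mode wins, $r \sim -\gamma s/(\alpha+\gamma)$, and $J/I \to \alpha + \gamma = \beta(\psi''(\theta(\infty))/\psi'(1)-1)$; if $\alpha + \gamma \le 0$ the homogeneous mode wins, so $I$ decays at rate $\gamma$ while $J$ decays at the faster rate $|\alpha|$, forcing $J/I \to 0$. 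The boundary case $\alpha+\gamma=0$ resonates and is handled by the resulting $te^{-\gamma t}$ factor, which still gives $J/I \to 0$.

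The main obstacle is part (2): the Lyapunov candidate $P$ is decreasing precisely when the sought inequality $J < (\lambda+\gamma)I$ holds, so a direct Gronwall argument is apparently circular; the one-sided differential inequality $\dot J \le \lambda J$, obtained from $\psi''' \ge 0$, is what breaks the circularity by giving an independent monotonicity for $Q$. Verifying the initial-condition check $(\lambda+\gamma)I(0) > J(0)$ rigorously under the ``$\theta(0) \approx 1$'' assumption is the other delicate piece.
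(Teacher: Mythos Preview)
Your argument is correct and closely tracks the paper's, though the packaging differs in each part.

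For (1), the paper argues via peak ordering: from $\dot J = \ddot I + \gamma\dot I$, the peak of $I$ forces $\dot J < 0$ there, hence $t_J < t_I$ and $\dot I > 0$ on $[0,t_J]$. Your integrating-factor identity $(e^{\gamma t}\dot I)' = e^{\gamma t}\dot J$ is more direct and sidesteps any appeal to uniqueness of the peak of $I$; both arguments reduce to the same initial check $\dot I(0)>0$, which both you and the paper attribute to the exponential-growth regime.

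For (2), the two approaches are algebraically the same computation. The paper sets $\alpha := (\lambda+\gamma)I - J$ and computes
\[
\dot\alpha + \gamma\alpha \;=\; \psi''(\theta)\dot\theta^2 + \beta\,\frac{\psi''(1)-\psi''(\theta)}{\psi'(1)}\,J \;\ge\; 0,
\]
but $\dot\alpha+\gamma\alpha = \lambda J - \dot J$, so this is precisely your differential inequality $\dot J \le \lambda J$ obtained from $\psi'''\ge 0$. Your quantity $(\lambda+\gamma)P - Q$ equals $e^{-\lambda t}\alpha$, so the positivity conclusions coincide. The ``arbitrarily close'' half is treated in the same spirit (the nonlinear terms in $\dot\alpha$ vanish as $\theta\to 1$ while $I$ grows), though the paper makes this explicit through the $\dot\alpha$ formula rather than your exponential-phase heuristic.

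For (3), both linearize about $\theta(\infty)$ and split on whether the forced rate $\alpha = \lambda(\theta(\infty))$ beats the homogeneous rate $-\gamma$. The paper carries this out via explicit Duhamel integrals for $J$ and $I$, while you phrase it as the competition between modes of $\dot r + \gamma r = -\gamma s$; the trichotomy $\alpha+\gamma \gtrless 0$ (with the resonant $te^{-\gamma t}$ case) is identical.
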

We illustrate this behavior of infection turnover rate $\frac{J(t)}{I(t)}$ in supplementary Fig.~S4.

Now, we are specifically interested in the mean and variance of the infected degree distribution, which by Theorem~\ref{thm:dmn} for $n=1$ satisfies
\begin{equation}
\dot m=\frac{J}{I}\left(m-\frac{\phi''(\log\theta)}{\phi'(\log\theta)}\right)
\label{eq:dm}
\end{equation}
with $m(0)=\mu$, and $v(t)=\mu_2(t)-\mu(t)^2$ with $\mu_2(t)$ also defined by Theorem~\ref{thm:dmn}. We show the trajectories of $m(t)$ in Fig.~\ref{fig:ms} and the trajectories of $v(t)$ in supplementary Fig.~S1.

The mean infected degree $m(t)$ is seen to first increase from the mean of the network degree distribution $\mu$, peak at approximately the mean of the neighbor degree distribution $\mu+\frac\nu\mu$, then decline monotonically. This is in fact true of all moments of $X(t)$: $m_n(t)$ starts out at the $n$-th moment of the network degree distribution $\mu_n$, peaks at approximately the $n$-th moment of the neighbor degree distribution $\frac{\phi^{(n+1)}(0)}{\phi(0)}$, then declines monotonically. We call this peak of $m(t)$, when the infected degree distribution is approximately equal to the neighbor degree distribution, the \textit{superspreading peak}. This notion can be formalized by the following result.

\begin{theorem}
In the limit as $\theta(0)\approx 1$, the infected degree distribution $X(t)$ will approach the neighbor degree distribution $K_{\textnormal{n}}$ at some time $t$ (which may change with $\theta(0)$ as it approaches 1). Specifically, for any small $\epsilon>0$ and $k\in\N$, there exist a $\theta(0)$ and time $t_\epsilon$ for which $\abs{p_j(t_\epsilon)-\frac{P(k)k}{\mu}}<\epsilon$ for all $j\le k$.
\label{thm:X_KN}
\end{theorem}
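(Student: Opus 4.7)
The plan is to exploit the scalar linear attraction form of Eq.~\eqref{eq:dpk2}, namely $\dot p_k = -\tfrac{J}{I}(p_k - q_k(\theta))$ with moving target $q_k(\theta) := \tfrac{P(k)k\theta^k}{\phi'(\log\theta)}$. Since $\phi'(0) = \mu$, evaluating at $\theta = 1$ gives $q_k(1) = \tfrac{P(k)k}{\mu}$, which is exactly the neighbor-degree mass at $k$. So the strategy is: on a time window over which (i) $\theta$ stays close to 1 and (ii) $J/I$ is bounded below by a positive constant, the attraction drives every $p_k$ close to $\tfrac{P(k)k}{\mu}$.

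Concretely, I would solve Eq.~\eqref{eq:dpk2} via the integrating factor $e^{\Theta(t)}$ with $\Theta(t) := \int_0^t \tfrac{J(s)}{I(s)}\,ds$, obtaining
\[
p_k(t) - \tfrac{P(k)k}{\mu} = e^{-\Theta(t)}\bigl(P(k) - \tfrac{P(k)k}{\mu}\bigr) + \int_0^t e^{\Theta(s)-\Theta(t)}\,\tfrac{J(s)}{I(s)}\,\bigl(q_k(\theta(s)) - \tfrac{P(k)k}{\mu}\bigr)\,ds,
\]
where I used $p_k(0) = P(k)$ and the identity $\int_0^t e^{\Theta(s)-\Theta(t)}\dot\Theta(s)\,ds = 1 - e^{-\Theta(t)}$. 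Since $\RO > 1$ forces $r := \beta\bigl(\tfrac{\psi''(1)}{\psi'(1)} - 1\bigr) > \gamma > 0$, Lemma~\ref{lemma:J_I}(2) lets me, for any prescribed $T, \delta > 0$, choose $\theta(0)$ close enough to 1 so that on $[0,T]$ both $\tfrac{J}{I} > r/2$ and $\abs{\theta - 1} < \delta$ hold. Pick $T$ large enough that $e^{-rT/2} < \epsilon/2$; then, using continuity of each $q_j$ at $\theta = 1$ (valid since $\phi'(0) = \mu > 0$), pick $\delta$ small enough that $\abs{q_j(\theta) - \tfrac{P(j)j}{\mu}} < \epsilon/2$ for all $j \le k$ whenever $\abs{\theta - 1} < \delta$ (uniform in $j$ since there are only finitely many). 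Noting $\abs{P(j) - \tfrac{P(j)j}{\mu}} \le 1$, the first term above is bounded by $e^{-rT/2} < \epsilon/2$ and the second by $(\epsilon/2)(1 - e^{-\Theta(T)}) \le \epsilon/2$, yielding $\abs{p_j(T) - \tfrac{P(j)j}{\mu}} < \epsilon$ for every $j \le k$. Setting $t_\epsilon := T$ completes the argument.

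The main obstacle is securing, through a single choice of $\theta(0)$, three simultaneous quantitative guarantees: a positive uniform lower bound on $J/I$, uniform closeness of $\theta$ to 1, and a long enough time horizon to exhaust the exponential decay. This is exactly what part (2) of Lemma~\ref{lemma:J_I} was set up to deliver, so with that lemma in hand the remainder reduces to a routine integrating-factor estimate together with a finite-dimensional continuity observation. The edge case $k = 0$ needs no special treatment, since $q_0 \equiv 0$ and $\tfrac{P(0) \cdot 0}{\mu} = 0$ agree trivially, and $p_0(t) = P(0)e^{-\Theta(t)}$ decays to $0$.
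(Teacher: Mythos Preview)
Your argument is correct and follows essentially the same route as the paper: both exploit the linear attraction form $\dot p_k = -(J/I)(p_k - q_k(\theta))$, secure a uniform positive lower bound on $J/I$ over an interval on which $\theta$ stays near $1$, and note that this interval can be made arbitrarily long by taking $\theta(0)$ close to $1$. The paper argues the same two-term split (initial-condition decay plus target deviation) slightly less explicitly, without writing out the integrating-factor formula.

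One small technical slip worth flagging: Lemma~\ref{lemma:J_I}(2) does not quite give $J/I > r/2$ on all of $[0,T]$, because $J(0)/I(0)\to r-\gamma$ as $\theta(0)\to1$, and this can fall below $r/2$ when $\gamma<r<2\gamma$. The paper instead invokes Lemma~\ref{lemma:J_I}(1), which supplies the clean uniform bound $J/I>\gamma$ for all $t\le t_J$; with that substitution (or with $r/2$ replaced by any positive constant below $r-\gamma$), your estimate goes through verbatim.
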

\begin{proof}
We provide an outline of the proof here and complete the details in the supplement. By Eq.~\eqref{eq:dpk2}, $p_k(t)$ is constantly moving toward the moving target $\frac{P(k)k\theta(t)^k}{\phi'(\log\theta(t))}$ at rate proportional to the infection turnover rate $\frac{J(t)}{I(t)}$. Conversely, $\theta(t)$ decreases monotonically and is bounded below by
\begin{equation}
\label{eq:theta_ineq}
\theta(t)>1-(1-\theta(0))e^{(\beta+\gamma)(\RO-1)t}\ \text{ for all }t>0,
\end{equation}
which follows from Eq.~\eqref{eq:dtheta} via
\begin{equation}
\label{eq:dlog_1_minus_theta}
\frac d{dt}\log(1-\theta(t))=(\beta+\gamma)(\RO-1)-\frac{\beta}{\psi'(1)}\left(\psi''(1)-\frac{\psi'(1)-\psi'(\theta(t))}{1-\theta(t)}\right)
\end{equation}
and from the convexity of $\psi'$ (since $\psi'''$ is assumed to exist and is positive). Thus, for any fixed value $\theta_J>\theta(\infty)$, by setting $\theta(0)$ sufficiently close to 1 we can ensure that there is a time $t_J$ for which $\theta(t_J)=\theta_J$ and this time can be made arbitrarily large by setting $\theta(0)$ sufficiently close to 1. Specifically, we choose $t_J>0$ to be the time at which incidence $J(t)$ peaks, which we show must occur and that $\theta_J=\theta(t_J)$ is constant regardless of initial condition. Furthermore, Lemma~\ref{lemma:J_I} tells us that $\frac{J(t)}{I(t)}>\gamma$ for all $t\le t_J$. With this constant positive lower bound on the infection turnover rate $\frac{J(t)}{I(t)}$ for $t\le t_J$ and with the ability to make $t_J$ arbitrarily large by choosing $\theta(0)$ sufficiently close to 1, we can then choose a $\theta(0)$ that will keep $\theta(1)$ close to 1 and $\frac{J(t)}{I(t)}>\gamma$ long enough for $p_j(t_J)$ to reach sufficiently close to $\frac{P(k)k}\mu$ for all $j\le k$.
\end{proof}

The peak values of $m(t)$ and higher moments $m_n(t)$ come as a direct corollary.
\begin{corollary}
In the limit as $\theta(0)\to 1$, $m_n(t)$ peaks at value
\begin{equation}
\lim\limits_{\theta(0)\to1}\max\limits_{t\ge0}m_n(t)=\frac{\phi^{(n+1)}(0)}{\phi'(0)}.
\label{eq:mn_peak}
\end{equation}
In particular,
\begin{equation}
\lim\limits_{\theta(0)\to1}\max\limits_{t\ge0}m(t)=\mu+\frac\nu\mu.
\label{eq:m_peak}
\end{equation}
\end{corollary}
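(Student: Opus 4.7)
The plan is to sandwich $\max_{t\ge 0} m_n(t)$ between a $\theta(0)$-independent upper bound equal to $\mu_{n+1}/\mu$ and a lower bound that approaches the same value as $\theta(0)\to 1$. The upper bound will follow from Eq.~\eqref{eq:dmn} once I establish monotonicity in $\theta$ of the attracting target $f(\theta):=\phi^{(n+1)}(\log\theta)/\phi'(\log\theta)$, while the lower bound will come from Theorem~\ref{thm:X_KN} after a tail-truncation argument to pass from convergence of finitely many $p_j$ to convergence of the full $n$-th moment.

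For the upper bound, I would first note that $f(\theta)$ is the $n$-th moment of the size-biased distribution $Q_\theta(k)\propto P(k)\,k\,\theta^k$. A direct calculation gives $\theta f'(\theta)=\mathrm{Cov}_{Q_\theta}(K,K^n)$, which is nonnegative by Chebyshev's sum inequality since $K$ and $K^n$ are comonotone. Hence $f(\theta)\le f(1)=\mu_{n+1}/\mu$ on $(0,1]$, and in particular $m_n(0)=\mu_n\le \mu_{n+1}/\mu$ as well. Plugging into Eq.~\eqref{eq:dmn}, whenever $m_n(t)>\mu_{n+1}/\mu$ the parenthesized factor is strictly positive and $J/I>0$, so $\dot m_n(t)<0$. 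Consequently the trajectory of $m_n$ cannot cross the constant level $\mu_{n+1}/\mu$ from below, giving $\max_t m_n(t)\le \mu_{n+1}/\mu$ uniformly in $\theta(0)$.

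For the matching lower bound, fix $\delta>0$. Using finiteness of $\mu_{n+1}$, choose $k$ large enough that $\mu^{-1}\sum_{j>k} P(j)j^{n+1}<\delta/2$, and then choose $\epsilon>0$ small enough that $\epsilon\sum_{j\le k} j^n<\delta/2$. Invoking Theorem~\ref{thm:X_KN} with this $\epsilon$ and $k$ supplies a $\theta(0)$ close to $1$ and a time $t_\epsilon$ at which $\abs{p_j(t_\epsilon)-P(j)j/\mu}<\epsilon$ for all $j\le k$. Truncating and bounding the tail gives
\[
m_n(t_\epsilon)\;\ge\;\sum_{j\le k} p_j(t_\epsilon)\,j^n\;\ge\;\frac{1}{\mu}\sum_{j\le k} P(j)\,j^{n+1}-\epsilon\sum_{j\le k} j^n\;\ge\;\frac{\mu_{n+1}}{\mu}-\delta,
\]
so $\max_t m_n(t)\ge \mu_{n+1}/\mu-\delta$ for that $\theta(0)$. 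Letting $\delta\to 0$ closes the sandwich and yields Eq.~\eqref{eq:mn_peak}; specializing to $n=1$ gives $\mu_2/\mu=\mu+\nu/\mu$, i.e.\ Eq.~\eqref{eq:m_peak}.

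The main obstacle is the tail step in the lower bound: Theorem~\ref{thm:X_KN} only controls finitely many coordinates of $X(t)$, whereas $m_n$ depends on the entire distribution. The assumed existence of $\mu_{n+1}$ is precisely what is needed to truncate the infinite sum at a controlled cost and make the two-stage choice (first $k$, then $\epsilon$, then $\theta(0)$) go through; without this moment assumption, the mass near $K_{\mathrm{n}}$ on bounded $j$ would not suffice to pin down $m_n$.
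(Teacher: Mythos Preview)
Your proof is correct and follows the paper's route of deducing the peak values from Theorem~\ref{thm:X_KN}; the paper in fact gives no argument beyond calling it a ``direct corollary.'' Your treatment is strictly more complete: you supply the barrier argument that $m_n$ cannot exceed $\mu_{n+1}/\mu$ (via monotonicity of the attracting target $f(\theta)$), and you carry out the tail-truncation step needed to pass from Theorem~\ref{thm:X_KN}'s control of finitely many $p_j$'s to convergence of the full $n$-th moment, both of which the paper leaves implicit.
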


We are also interested in the time $t_m$ it takes for the mean infected degree $m(t)$ to peak, called the \textit{superspreading peak time}, and its relation to the peak time $t_I$ of prevalence $I(t)$. Interestingly, we find that the peak time for $m(t)$ is always less than half the peak time of $I(t)$ when $\theta(0)$ is sufficiently close to 1, suggesting that the potential for superspreading is already diminished by the time infections in an epidemic have taken off. In the next result, we provide an upper bound on the ratio $\frac{t_m}{t_I}$ in the limit of $\theta(0)\to1$ that is even tighter than $\frac12$, while noting that in this limit $\frac{t_m}{t_I}$ stays greater than 0 despite $t_I$ going to $\infty$.

\begin{theorem}
\label{thm:tm}
If $\theta(0)$ is sufficiently close to 1, then $m(t)$ and $I(t)$ will both peak at times $t_m>0$ and $t_I>0$ respectively. And as $\theta(0)$ approaches 1 both $t_m$ and $t_I$ will diverge to $\infty$ while
\begin{equation}
0<\lim_{\theta(0)\to1}\frac{t_m}{t_I}<\frac12-\frac{\gamma}{4(\beta+\gamma)(\RO-1)+2\gamma}.
\label{eq:tm}
\end{equation}
\end{theorem}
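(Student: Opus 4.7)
The plan is to work with $u(t) := 1-\theta(t)$ and exploit the bound $u(t) < u_0 e^{rt}$ from Eq.~\eqref{eq:theta_ineq}, where $u_0 := 1-\theta(0)$ and $r := (\beta+\gamma)(\RO-1)$. Writing $T(\theta) := \phi''(\log\theta)/\phi'(\log\theta)$, Eq.~\eqref{eq:dm} becomes $\dot m = (J/I)(T(\theta)-m)$ with $T(1) = \mu+\nu/\mu$. First I verify both peak times are strictly positive. Lemma~\ref{lemma:J_I}(1) forces $J/I > \gamma$ near $t=0$, so $\dot I(0) > 0$; combined with $I(\infty)=0$, $I$ peaks at some $t_I > 0$. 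Similarly $T(\theta(0)) \to T(1) > \mu = m(0)$ as $\theta(0) \to 1$, so Eq.~\eqref{eq:dm} gives $\dot m(0) > 0$, and Eq.~\eqref{eq:m_peak} asserts $m(t_m) \to T(1) > m(\infty)$, forcing $m$ to peak at some $t_m > 0$.

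Divergence $t_I \to \infty$ follows from convexity of $\psi$: $I(t) \le 1-\psi(\theta(t)) \le \mu u(t) < \mu u_0 e^{rt}$, so $t_I > \log(I(t_I)/(\mu u_0))/r$, and the peak value $I(t_I)$ tends to a positive constant (the peak of the limiting trajectory). For $t_m \to \infty$, I combine Lemma~\ref{lemma:J_I}(2) ($J/I \le r+\gamma$) and $T(\theta) \le T(1)$ to get $\dot m \le (r+\gamma)(T(1)-m)$, yielding $T(1)-m(t) \ge (T(1)-\mu)e^{-(r+\gamma)t}$. Meanwhile, the peak identity $m(t_m) = T(\theta(t_m))$ together with a first-order Taylor expansion of $T$ around $\theta=1$ and Eq.~\eqref{eq:theta_ineq} give $T(1) - m(t_m) \le T'(1) u_0 e^{rt_m} + O(u_0^2 e^{2rt_m})$. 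Comparing these yields $e^{(2r+\gamma)t_m} \ge c/u_0$, hence $t_m \ge \log(1/u_0)/(2r+\gamma) + O(1) \to \infty$.

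For the ratio bound, I aim for matching asymptotics $t_m \sim \log(1/u_0)/(2r+\gamma)$ and $t_I \sim \log(1/u_0)/r$. Parametrizing time by $u$ gives $t_I = \int_{u_0}^{u(t_I)} du/\dot u(u)$; since $\dot u(u) \sim ru$ near $u=0$ (Taylor expansion of $\dot\theta$) and $u(t_I)$ tends to a positive constant, $t_I = \log(1/u_0)/r + O(1)$. For the matching upper bound on $t_m$, I reverse the divergence estimates: for any $\delta > 0$, Lemma~\ref{lemma:J_I}(2) gives $J/I > r+\gamma-\delta$ throughout the early phase $\{t:\theta(t)\ge 1-\epsilon\}$ for $\theta(0)$ close enough to 1, yielding $T(1)-m(t) \le (T(1)-\mu)e^{-(r+\gamma-\delta)t} + O(u_0 e^{rt})$; simultaneously $\dot u \ge (r-\delta)u$ for small $u$ gives $u(t) \ge u_0 e^{(r-\delta)t}$. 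Equating the two at $t_m$ produces $t_m \le \log(1/u_0)/(2r+\gamma-2\delta) + O_\delta(1)$. Dividing and letting $\delta \to 0$ yields $\lim t_m/t_I \le r/(2r+\gamma) = 1/2 - \gamma/(4r+2\gamma)$; strictness of the inequality in Eq.~\eqref{eq:tm} comes from the matched $O(1)$ corrections retaining definite sign for every finite $\theta(0)<1$, and positivity of the limit is immediate since $t_m$ and $t_I$ share the $\log(1/u_0)$ divergence scale.

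The main obstacle is the upper bound on $t_m$: it requires simultaneously controlling that $J/I$ stays close to $r+\gamma$ (via Lemma~\ref{lemma:J_I}(2)) and that $u(t)$ grows at essentially rate $r$ (via Taylor expansion of $\dot\theta$) over a time window of duration $\Theta(\log(1/u_0))$ that diverges as $\theta(0) \to 1$. The mismatched characteristic exponents $2r+\gamma$ for $t_m$ (combining $u$'s growth with the exponential approach of $m$ to $T(1)$) versus $r$ for $t_I$ (only $u$'s growth) are the origin of the ratio $r/(2r+\gamma)$ appearing in Eq.~\eqref{eq:tm}.
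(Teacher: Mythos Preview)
Your overall strategy matches the paper's: reduce both $t_m$ and $t_I$ to the common scale $\log(1/u_0)$, obtaining $t_m \sim \log(1/u_0)/(2r+\gamma)$ and $t_I \sim \log(1/u_0)/r$, so the ratio tends to $r/(2r+\gamma) = \tfrac12 - \tfrac{\gamma}{4r+2\gamma}$. The divergence arguments and the $t_I$ asymptotic are fine (the paper routes through $t_J$ and shows $t_I/t_J\to1$, but your direct approach via $\int du/\dot u$ works equally well since the $\theta$-equation is autonomous).

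The gap is in your upper bound for $t_m$. Your inequalities give
\[
w(t_m)\;\le\;(T(1)-\mu)\,e^{-(r+\gamma-\delta)t_m}\;+\;C'u_0e^{rt_m},
\qquad
w(t_m)\;\approx\;T'(1)\,u(t_m)\;\ge\;T'(1)\,u_0e^{(r-\delta)t_m},
\]
with $w=T(1)-m$. To extract an upper bound on $t_m$ you need the right-hand quantity to dominate the $C'u_0e^{rt_m}$ term, i.e.\ $T'(1)e^{-\delta t_m}>C'$. But for any fixed $\delta>0$ this fails once $t_m$ is large, which is precisely the regime $\theta(0)\to1$ forces. Letting $\delta\to0$ afterward does not help: the constant you would need, $T'(1)-C'$, depends on the \emph{exact} coefficient $C'=\frac{r+\gamma}{2r+\gamma}T'(1)$ arising from the Duhamel integral, and your $O(\cdot)$ discards it. The paper avoids this by solving the limiting linear ODE $\dot m \approx -(r+\gamma)\bigl(m-T(1)+T'(1)u_0e^{rt}\bigr)$ explicitly, finding $m(t)\approx T(1)-\frac{r+\gamma}{2r+\gamma}T'(1)u_0e^{rt}-Ce^{-(r+\gamma)t}$, and then setting $\dot m=0$ to read off $t_m$; the factor $\frac{r+\gamma}{2r+\gamma}<1$ is exactly what makes the comparison close. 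Your differential-inequality route can be repaired, but only by tracking that constant, which amounts to redoing the paper's explicit solution.

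A secondary point: your justification of the \emph{strict} inequality (``$O(1)$ corrections retaining definite sign'') does not establish strictness of a limit. If $t_m/t_I \to r/(2r+\gamma)$ exactly, the $O(1)$ corrections vanish in the ratio; strictness for each finite $\theta(0)$ says nothing about the limit. (The paper's own argument is also somewhat informal on this point, so this is a lesser concern.)
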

\begin{proof}
We provide an outline of the proof here and complete the details in the supplement. We have already shown in Theorem~\ref{thm:X_KN} that $t_m$ exists and that $t_J$ goes to infinity as $\theta(0)$ approaches 1, and we show in the supplement that $t_m$ also goes to infinity. And the peak $t_I>0$ of $I(t)$ also exists from the fact that $I(0)$ is close to zero and $\RO>1$ by assumption \cite{Miller2011}. Furthermore, $\lim\limits_{\theta(0)\to1}\frac{t_I}{t_J}=1$, as we show in the supplement. Thus, we can replace $t_I$ with $t_J$ in Eq.~\eqref{eq:tm}, which greatly simplifies the analysis since, as we have previously mentioned, $\theta(t_J)=\theta_J$ is always constant regardless of initial conditions.

We now consider $1-\theta(t)$ and bound it above and below by exponential functions for $t\le t_J$. Above, it is bounded by an exponential with rate $\lambda_1=(\beta+\gamma)(\RO-1)$ from Eq.~\eqref{eq:theta_ineq}. And below it is bounded by an exponential with rate $\lambda_2=\lambda_1-\frac\beta{\psi'(1)}\left(\psi''(1)-\frac{\psi'(1)-\psi'(\theta_J)}{1-\theta_J}\right)$ from Eq.~\eqref{eq:dlog_1_minus_theta} and since $\frac{\psi'(1)-\psi'(x)}{1-x}$ is increasing for $x<1$ from the convexity of $\psi'$. Since $\lambda_1>\lambda_2>0$, as we prove, then
\begin{equation}
0<\frac1{\lambda_1}<\frac{t_J}{\log\frac{1}{1-\theta(0)}+\log(1-\theta_J)}<\frac1{\lambda_2}.
\label{eq:tJ_ineq}
\end{equation}

To derive $t_m$, note that the $\psi''(1)-\frac{\psi'(1)-\psi'(\theta(t))}{1-\theta(t)}$ term in Eq.~\eqref{eq:dlog_1_minus_theta} vanishes uniformly across all $t\le t_m$ as $\theta(0)\to 1$ (since we have shown that $\lim\limits_{\theta(0)\to\infty}\theta(t_m)=1$), especially as compared to $\log(1-\theta(t))$ whose magnitude becomes arbitrary large for $t\le t_m$ as $\theta(0)\to 1$. Thus, the inequality in Eq.~\eqref{eq:theta_ineq} approaches equality uniformly across all $t\le t_m$ in the limit as $\theta(0)\to1$, and
\begin{equation}
\lim\limits_{\theta(0)\to1}\dot m(t)=-\frac{J(t)}{I(t)}\left(m-\big(f(1)-f'(1)(1-\theta(0))e^{(\beta+\gamma)(\RO-1)t}\big)\right)\text{ for all }t\le t_m
\label{eq:dm_approx}
\end{equation}
converges uniformly across $t\le t_m$, with $f(x)=\frac{\phi''(\log x)}{\phi'(\log x)}$. From this limiting behavior for $m(t)$ and by Lemma~\ref{lemma:J_I}, we can show
\begin{equation}
\lim\limits_{\theta(0)\to1}\frac{t_m}{\log\frac1{1-\theta(0)}}=\frac1{2\lambda_1+\gamma}.
\label{eq:tm_limit}
\end{equation}
Finally, as $\theta(0)\to1$ and $\log\frac1{1-\theta(0)}\to\infty$, we see that $t_m$ diverges, and dividing Eq.~\eqref{eq:tm_limit} by Eq.~\eqref{eq:tJ_ineq} then replacing $t_J$ with $t_I$ in the limit as $\theta(0)\to1$ gives the desired result.
\end{proof}

In Fig.~\ref{fig:peak_ts}, we show how the superspreading peak time $t_m$ compares to this upper bound and to the prevalence peak time $t_I$.

Lastly, we aim to find the limiting behavior of the infected degree distribution as $t\to\infty$. Eq.~\eqref{eq:dmn} shows that $m(t)$ will eventually move towards the value $\frac{\phi^{(n+1)}(\theta(\infty))}{\phi'(\theta(\infty))}$, but this will only be reached in the limit of $t\to\infty$ when the infection turnover rate $\frac{J(t)}{I(t)}$ stays greater than 0, which we know occurs if and only if $\psi''(\theta(\infty))>\mu$ by Lemma~\ref{lemma:J_I}. Similarly, Eq.~\eqref{eq:dpk2} shows that $p_k(t)$ will reach $\frac{P(k)k\theta(\infty)^k}{\phi'(\log\theta(\infty))}$ as $t\to\infty$ if and only if $\psi''(\theta(\infty))>\mu$.  This immediately gives the result
\begin{theorem}
If $\psi''(\theta(\infty))>\mu$, then
\begin{equation*}
\lim\limits_{t\to\infty}m_n(t)=\frac{\phi^{(n+1)}(\theta(\infty))}{\phi'(\theta(\infty))}
\end{equation*}
for any $n$ for which $\mu_{n+1}$, the $(n+1)$-th moment of the network degree distribution, exists; and
\begin{equation*}
\lim\limits_{t\to\infty}p_k(t)=\frac{P(k)k\theta(\infty)^k}{\phi'(\log\theta(\infty))}
\end{equation*}
for all $k$. 

Otherwise, if $\psi''(\theta(\infty))\le\mu$, then
\begin{equation*}
\lim\limits_{t\to\infty}m_n(t)>\frac{\phi^{(n+1)}(\theta(\infty))}{\phi'(\theta(\infty))}
\end{equation*}
for any $n$ for which $\mu_{n+1}$ exists.
\label{thm:X_infty}
\end{theorem}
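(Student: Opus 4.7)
The plan is to solve Eq.~\eqref{eq:dmn} and Eq.~\eqref{eq:dpk2} explicitly: both are linear ODEs of the form $\dot y=-(J/I)(y-f(t))$ with $f$ continuous in $\theta(t)$, so $f(t)$ approaches a limit $f_\infty$ as $\theta(t)\to\theta(\infty)$. Using $\dot I=J-\gamma I$, we have $J/I=\tfrac{d}{dt}\log I+\gamma$, which yields the integrating factor $e^{A(t)}$ with the closed form $A(t)=\gamma t+\log(I(t)/I(0))$ and the explicit solution
\[
y(t)=y(0)\,e^{-A(t)}+\int_0^t (J/I)(s)\,f(s)\,e^{A(s)-A(t)}\,ds,
\]
realizing $y(t)$ as a probability-weighted combination of $y(0)$ and the target values $f(s)$.

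For the first case, Lemma~\ref{lemma:J_I}(3) gives $J/I\to c>0$, so $A(\infty)=\infty$, the weight on $y(0)$ vanishes, and the remaining weight concentrates at large $s$. A standard L'H\^opital argument on $(y(0)+\int_0^t (J/I)\,f\,e^{A}\,ds)/e^{A(t)}$ then yields $y(\infty)=f_\infty$. Applying this with $(y,f)=(m_n,T)$ for $T(t)=\phi^{(n+1)}(\log\theta)/\phi'(\log\theta)$, and with $(y,f)=(p_k,Q_k)$ for $Q_k(t)=P(k)k\theta^k/\phi'(\log\theta)$, produces both displayed limits.

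For the second case (focusing on the strict regime $\psi''(\theta(\infty))<\mu$), Lemma~\ref{lemma:J_I}(3) only gives $J/I\to 0$. Set $u_n=m_n-T$. By positive correlation of $K$ and $K^n$ we have $u_n(0)=\mu_n-\mu_{n+1}/\mu<0$. Now $T(\theta)$ is strictly increasing in $\theta$ (being the $n$-th moment of the distribution with mass $\propto P(k)k\theta^k$, which shifts toward larger $k$ as $\theta$ grows), so $\dot T<0$ strictly along the trajectory. The ODE $\dot u_n=-(J/I)u_n-\dot T$ therefore has both summands positive whenever $u_n\le 0$, so $u_n$ is strictly increasing there and must cross zero at some $t_n^*>0$; thereafter $u_n>0$, since at any subsequent zero we would need $\dot u_n\le 0$, contradicting $\dot u_n=-\dot T>0$ there. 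Integrating the rearrangement $(u_n e^{A})'=-\dot T\,e^{A}$ from $t_n^*$ gives
\[
u_n(\infty)=e^{-A(\infty)}\int_{t_n^*}^\infty\bigl(-\dot T(s)\bigr)\,e^{A(s)}\,ds.
\]
Linearizing Eq.~\eqref{eq:dtheta} at $\theta(\infty)$ yields eigenvalue $\beta(\psi''(\theta(\infty))/\mu-1)-\gamma<-\gamma$ in this case, so $J(t)$ decays faster than $e^{-\gamma t}$ and $I(t)e^{\gamma t}=I(0)+\int_0^t J(s)e^{\gamma s}\,ds$ converges to a positive constant, giving $A(\infty)<\infty$. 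Combined with the strict positivity of the integrand, this delivers $u_n(\infty)>0$, i.e.\ $m_n(\infty)>T_\infty$.

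The main obstacle is the boundary case $\psi''(\theta(\infty))=\mu$, where the linearization eigenvalue equals exactly $-\gamma$ and the next-order Taylor term of $\psi'$ enters the asymptotics of $\theta(t)-\theta(\infty)$. A quick computation in that regime suggests $I(t)\sim Ct\,e^{-\gamma t}$ and hence $A(t)\sim\log t\to\infty$, which makes the integrating-factor formula for $u_n(\infty)$ degenerate and requires a delicate L'H\^opital limit; my tentative computation yields $u_n(\infty)=0$ at the boundary rather than $>0$, suggesting that either the theorem's $\le$ should be strict $<$, or a finer argument exploiting the subleading quadratic correction in the linearization is needed to handle the equality case.
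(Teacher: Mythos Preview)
Your approach is substantially more detailed than the paper's: the paper gives only a one-sentence heuristic (``$m_n$ chases $T$ at rate $J/I$; if $J/I$ stays positive the target is reached, otherwise not''), whereas you solve the linear ODE explicitly via the integrating factor $e^{A(t)}$ with the closed form $A=\gamma t+\log(I/I(0))$, and then analyse the asymptotics of $A(\infty)$ case by case.  Your treatment of the first case (L'H\^opital when $A(\infty)=\infty$) and your linearization argument giving $A(\infty)<\infty$ when $\psi''(\theta(\infty))<\mu$ are both correct and considerably sharper than what the paper offers.  In particular, your observation about the boundary $\psi''(\theta(\infty))=\mu$ is well taken: there the supplement's own proof of Lemma~\ref{lemma:J_I} shows $J/I\sim 1/t$, hence $A(t)\sim\log t\to\infty$, and the L'H\^opital argument of case~1 applies verbatim to give $m_n(\infty)=T_\infty$ with equality.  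So the strict inequality in the theorem really does require the strict hypothesis $\psi''(\theta(\infty))<\mu$; you have caught a genuine imprecision in the stated result.

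There is, however, one real gap in your argument for the strict subcritical case.  From ``$\dot u_n>0$ whenever $u_n\le0$'' you conclude that $u_n$ \emph{must} cross zero, but ``strictly increasing while nonpositive'' does not force a zero crossing (think of $-e^{-t}$).  The clean fix is to appeal to the peak of $m_n$: since $\dot m_n=-(J/I)u_n$ and $J/I>0$ for all finite $t$, any interior critical point of $m_n$ is exactly a zero of $u_n$.  The paper establishes (Theorem~\ref{thm:X_KN}, Corollary after it, and Theorem~\ref{thm:tm} for $n=1$) that $m_n$ does peak at some $t_m>0$ when $\theta(0)$ is close to $1$; at that point $u_n(t_m)=0$ and $\dot u_n(t_m)=-\dot T(t_m)>0$, so $u_n>0$ immediately after and, by your own barrier argument, stays positive thereafter.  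With $t_n^\ast=t_m$ in hand, your integrating-factor identity $(u_n e^A)'=-\dot T\,e^A>0$ together with $A(\infty)<\infty$ then gives $u_n(\infty)>0$ as you claim.
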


Before moving on to deriving the effective degree distribution and the secondary case distribution, we provide the examples of infected degree distributions at the superspreading peak and at the end of the epidemic for Poisson and negative binomial networks. These follow directly from Theorems~\ref{thm:X_KN} and \ref{thm:X_infty}.

\begin{figure}
\centering
\includegraphics[width=\linewidth]{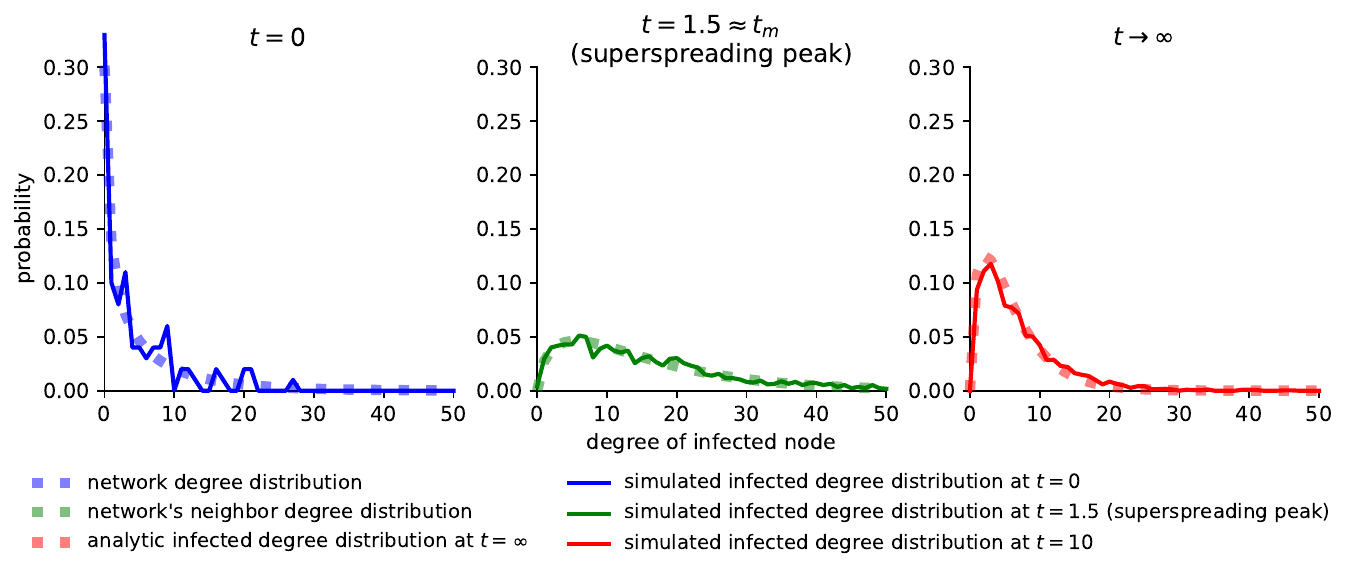}
\caption{The infected degree distribution at the beginning, superspreading peak, and end of an epidemic on a negative binomial distribution with high dispersion (dispersion parameter $r=.5$). Solid lines showing infected degree distributions from a single simulation and dashed lines showing the analytic infected degree distributions given by the network degree distribution $K$ at $t=0$, by Eq.~\eqref{eq:NB_tm} at the superspreading peak, and by Eq.~\eqref{eq:NB_infty} for $t\to\infty$.}
\label{fig:dists}
\end{figure}

\begin{examples*}
If the network degree distribution is Poisson with rate parameter $\lambda$ ($K\sim\Pois(\lambda)$), then at the superspreading peak $t_m$, the infected degree distribution $X$ follows
\begin{equation}
X(t_m)-1\approx\Pois(\lambda).
\end{equation}
If $\psi''(\theta(\infty))>\mu$, then as $t\to\infty$,
\begin{equation}
X(\infty)-1\sim\Pois\big(\theta(\infty)\lambda\big).
\end{equation}

If the network degree distribution is negative binomial with dispersion parameter $r$ and success probability parameter $p$ ($K\sim\NB(r,p)$), then at the superspreading peak $t_m$,
\begin{equation}
X(t_m)-1\approx\NB(r+1,p).
\label{eq:NB_tm}
\end{equation}
If $\psi''(\theta(\infty))>\mu$, then as $t\to\infty$,
\begin{equation}
X(\infty)-1\sim\NB\big(r+1,1-\theta(\infty)(1-p)\big).
\label{eq:NB_infty}
\end{equation}
\end{examples*}
We demonstrate this result in Fig.~\ref{fig:dists} for the negative binomial distribution with high dispersion (dispersion parameter $r=.5$, as this is the only network we examine that satisfies $\psi''(\theta(\infty))>\mu$, as shown by supplementary Fig.~S4).

\section*{The effective degree distribution \texorpdfstring{$E(t)$}{E(t)}}

Since some of an infected node's neighbors may be susceptible at any time, and thus unable to have infection transmitted to them, it is important to consider not only how many neighbors an infected node has, but also how many of those neighbors are susceptible. We call the distribution of the number susceptible neighbors that infected nodes have at time $t$ the \textit{effective degree distribution} $E(t)$. By definition, $E(t)\le X(t)$.

Deriving this distribution is less straightforward, but can be done with the help of new variables $J_j(t)$, the instantaneous incidence of newly infected degree-$j$ nodes at time $t$, and $H_j(t)$, the probability that a degree $j$ node is currently infected at time $t$ and has not transmitted infection to a specific, arbitrary neighbor. Just like with $J(t)$, $J_j(t)$ can easily be derived as
\begin{equation}
J_j=-\dot S_j=-j\theta^{j-1}\dot\theta.
\label{eq:Jj}
\end{equation}
And just as $\dot I_j=J_j-\gamma I_j$ for $I(t)$ since degree $j$ nodes become infected at rate $J_j$ and stop being infected at rate $\gamma$, we now have
\begin{equation}
\dot H_j=J_j-(\beta+\gamma)H_j,\ H_j(0)=I(0),
\label{eq:Hj}
\end{equation}
since a node stops having the property ``infected but has not yet transmitted infection to an arbitrary neighbor'' at rate $\beta+\gamma$ (by either infecting the neighbor or recovering). Initially, $H_j(0)=I_j(0)=I(0)$ since there has been no time yet for any transmissions to happen yet at $t=0$.

To figure out how many susceptible neighbors an infected node of degree $j$ has at time $t$, we must consider two cases: (1) the node was an initial infected node or (2) it was not an initial infected node. The probability that a degree-$j$ infected node is of the first case is $\frac{P(j)e^{-\gamma t}I(0)}{I(t)}$ by Bayes' law, since $e^{-\gamma t}I(0)$ is the probability that an initial infected node is still infected by time $t$. The probability that a degree-$j$ node is of the second case is thus $p_j-\frac{P(j)e^{-\gamma t}I(0)}{I(t)}=\frac{P(j)(I_j(t)-e^{-\gamma t}I(0))}{I(t)}$.

For the first case, consider an initial infected node of degree $j$ which is still infected at time $t$. The probability that one of its neighbors with degree $l$ has not yet had infection transmitted to it by any of its other $l-1$ neighbors is $\theta(t)^{l-1}$. Thus, the probability that a degree-$l$ neighbor is still susceptible is equal to $\theta(t)^{l-1}$ times the probability that the initial infected node has not transmitted infection to this neighbor, which is $e^{-\beta t}$. Then considering that a neighbor will be degree $l$ with probability $\frac{P(l)l}{\psi'(1)}$ according to the neighbor degree distribution, the probability that an arbitrary neighbor of the initial infected node is susceptible at time $t$ is $\sum_{l=0}^\infty\frac{P(l)l\theta(t)^{l-1}}{\psi'(1)}e^{-\beta t}=\frac{\psi'(\theta(t))}{\psi'(1)}e^{-\beta t}$, and the probability that the initial infected node has exactly $k$ susceptible neighbors is
\begin{equation}
\eta_{1,j,k}(t)=\binom jk\left(\frac{\psi'(\theta(t))}{\psi'(1)}e^{-\beta t}\right)^k\left(1-\frac{\psi'(\theta(t))}{\psi'(1)}e^{-\beta t}\right)^{j-k}.
\label{eq:eta1}
\end{equation}

For the second case, consider an infected node of degree $j$ at time $t$ which was not infected initially. The calculation is similar to the first case, except now the infected node can only have a maximum of $j-1$ susceptible neighbors, instead of $j$ as before, since one of the infected node's neighbors must have infected it. And now the probability that this non-initial infected node has not yet transmitted infection to an arbitrary neighbor, given its non-initial infected state, is $\frac{H_j(t)-e^{-(\beta+\gamma)t}I(0)}{I_j(t)-e^{-\gamma t}I(0)}$, with $H_j(t)-e^{-(\beta+\gamma)tI(0)}$ representing the probability that a degree $j$ node is infected but has not yet transmitted to an arbitrary neighbor and was not an initial infected. From this, we get the probability that this non-initial infected node has exactly $k$ susceptible neighbors to be
\begin{equation}
\begin{aligned}
\eta_{2,j,k}(t)=&\binom{j-1}k\left(\frac{\psi'(\theta(t))}{\psi'(1)}\frac{H_j(t)-e^{-(\beta+\gamma)t}I(0)}{I_j(t)-e^{-\gamma t}I(0)}\right)^k\times\\
&\quad\left(1-\frac{\psi'(\theta(t))}{\psi'(1)}\frac{H_j(t)-e^{-(\beta+\gamma)t}I(0)}{I_j(t)-e^{-\gamma t}I(0)}\right)^{j-1-k}.
\end{aligned}
\label{eq:eta2}
\end{equation}

\begin{figure}[!t]
\centering
\includegraphics[width=\textwidth]{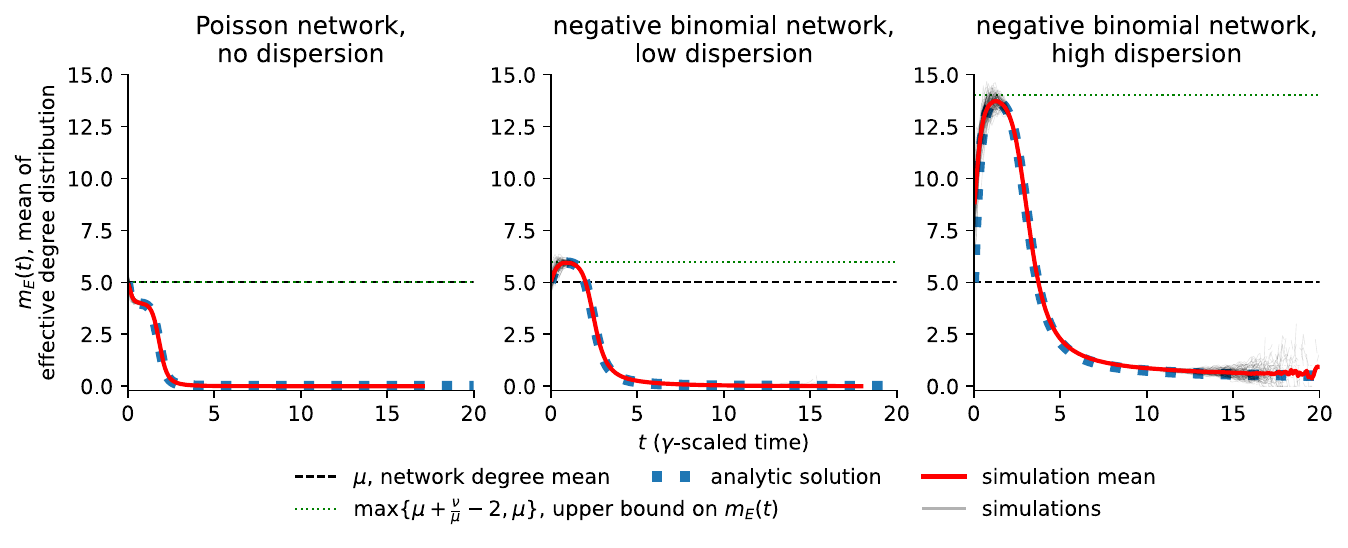}
\caption{Trajectories of the effective degree distribution's mean $m_E(t)$ for three different network degree distributions. Red curves show the mean of 200 simulations (whose individual trajectories are also plotted by faint gray curves) while the blue dotted curves show the analytic solution from Eq.~\eqref{eq:dmE}. Black dashed lines show the mean $\mu$ of the network degree distribution and green dotted lines show the upper bound $\max\{\mu+\frac\nu\mu-2,\mu\}$ provided by Theorem~\ref{thm:mE_peak}.}
\label{fig:mEs}
\end{figure}

Putting this all together, we arrive at the mass function $p_{E,k}(t)$ for the effective degree distribution $E(t)$:
\begin{equation}
p_{E,k}(t)=\sum_{j=k}^\infty\frac{P(j)e^{-\gamma t}I(0)}{I(t)}\eta_{1,j,k}(t)+\sum_{j=k+1}^\infty\frac{P(j)(I_j(t)-e^{-\gamma t}I(0))}{I(t)}\eta_{2,j,k}(t).
\label{eq:pEk}
\end{equation}
While this expression is not particularly nice, the mean $m_E(t)=\sum_{k=0}^\infty p_{E,k}(t)$ of $E(t)$ simplifies to
\begin{equation}
m_E(t)=\frac{\psi'(\theta(t))}{\psi'(1)}\left[\sum_{j=1}^\infty P(j)(j-1)\frac{H_j(t)}{I(t)} + \frac{e^{-(\beta+\gamma)t}I(0)}{I(t)}(1-P(0))\right]
\label{eq:mE}
\end{equation}
where $P(0)$ is the probability a node in the network has degree 0, as the sums of the binomial terms simplify to the expectations of binomial random variables. However, we can do even better by considering how $m_E(t)$ changes over time, expressing it with the following differential equation:
\begin{theorem}
The mean $m_E(t)$ of the effective degree distribution $E(t)$ satisfies
\begin{equation}
\dot m_E=-\left(\frac {J}I+\beta-\frac{\psi''(\theta)\dot\theta}{\psi'(\theta)}\right)m_E+\frac {-\psi'(\theta)\dot\theta}I\frac{\psi''(\theta)\theta}{\psi'(1)},
\label{eq:dmE}
\end{equation}
with initial value given by
\begin{equation}
m_E(0)=\psi'(\theta(0))\approx\mu.
\label{eq:mE0}
\end{equation}
\label{thm:dmE}
\end{theorem}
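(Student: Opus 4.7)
The plan is to differentiate the explicit expression for $m_E(t)$ given in Eq.~\eqref{eq:mE} directly, using the known ODEs for $H_j$, $I$, and $\theta$, and then simplify until the right-hand side matches the claimed form. The initial condition will then fall out by substitution into Eq.~\eqref{eq:mE} at $t=0$.

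To set up the computation cleanly, I would write $m_E = A \cdot B$ where $A(t) = \psi'(\theta(t))/\psi'(1)$ and
\[
B(t) = \frac{W(t)}{I(t)}, \qquad W(t) = \sum_{j=1}^\infty P(j)(j-1)H_j(t) + e^{-(\beta+\gamma)t}I(0)(1-P(0)).
\]
Then $\dot A = \psi''(\theta)\dot\theta/\psi'(1)$ and $\dot W$ is computed from $\dot H_j = J_j - (\beta+\gamma)H_j$ together with the explicit derivative of the exponential term; the $(\beta+\gamma)$ terms combine so that $\dot W = \sum_{j=1}^\infty P(j)(j-1)J_j - (\beta+\gamma)W$. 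The key algebraic step is to use $J_j = -j\theta^{j-1}\dot\theta$ and recognize
\[
\sum_{j=1}^\infty P(j)(j-1)j\theta^{j-1} = \theta \sum_{j=2}^\infty P(j)j(j-1)\theta^{j-2} = \theta\,\psi''(\theta),
\]
which yields $\dot W = -\dot\theta\,\theta\,\psi''(\theta) - (\beta+\gamma)W$. A quotient-rule expansion $\dot B = \dot W/I - B\dot I/I$ together with $\dot I = J - \gamma I$ from Eq.~\eqref{eq:dI} then gives $\dot B = -\dot\theta\,\theta\,\psi''(\theta)/I - B(\beta + J/I)$, where the two $\gamma B$ contributions cancel. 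Assembling $\dot m_E = \dot A\, B + A\, \dot B$ and using $AB = m_E$ and $A\psi''(\theta)/\psi'(\theta) = \psi''(\theta)/\psi'(1)$ on the first term, and $A\cdot(-\dot\theta)/I = (J/I)/\psi'(1)$ on the source term (since $J = -\psi'(\theta)\dot\theta$), produces exactly Eq.~\eqref{eq:dmE}.

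For the initial condition I would plug $t = 0$ into Eq.~\eqref{eq:mE}, noting that $H_j(0) = I(0)$ and that the bracketed quantity telescopes:
\[
\sum_{j=1}^\infty P(j)(j-1) + (1 - P(0)) = \sum_{j=1}^\infty P(j)j = \mu,
\]
so that $m_E(0) = \psi'(\theta(0))\mu/\psi'(1) = \psi'(\theta(0))$, which is approximately $\mu$ when $\theta(0)\approx 1$.

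The bookkeeping is the main obstacle: the expression for $m_E$ mixes a product ($A\cdot B$) with a quotient (the $W/I$ inside $B$) and an infinite sum indexed by $j$, so the danger is losing sign cancellations or the coefficient of the source term. The decisive observation that keeps the derivation tractable is spotting that the sum $\sum_j P(j)(j-1)J_j$ collapses to $-\dot\theta\,\theta\,\psi''(\theta)$ via the identity above; once that identity is in place and one notes that the $\gamma B$ from $\dot I$ cancels the $-\gamma B$ hidden in the $-(\beta+\gamma)B$ shift, the remaining manipulations are routine and land exactly on the stated ODE.
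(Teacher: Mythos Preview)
Your proposal is correct and follows exactly the approach the paper uses: differentiate Eq.~\eqref{eq:mE} using the ODEs for $H_j$, $I$, and $\theta$, then read off the initial condition from $H_j(0)=I(0)$. The paper's own proof is a two-line sketch that simply asserts these steps, so your write-up is in fact a faithful expansion of it, with the same decomposition $m_E = A\cdot W/I$, the same use of $\sum_j P(j)(j-1)J_j = -\dot\theta\,\theta\,\psi''(\theta)$, and the same cancellation of the $\gamma$ terms.
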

\begin{proof}
Eq.~\eqref{eq:dmE} follows from differentiating Eq.~\eqref{eq:mE}. And Eq.~\eqref{eq:mE0} follows since $H_j(0)=I(0)$ for all $j$, so $m_E(0)=\frac{\psi'(\theta(t))}{\psi'(1)}\left[\sum_{j=1}^\infty P(j)(j-1) +(1-P(0))\right]=\psi'(\theta(0))$, which is approximately $\psi'(1)=\mu$ for $\theta(0)\approx1$.
\end{proof}

The higher moments of $E(t)$ do not yield such nice differential equations, but we do provide a formula for the second moment $m_{E,2}(t)$ in the supplement, from which the variance can be calculated as $v_E(t)=m_{E,2}(t)-m_E(t)^2$. We show the trajectories of $m_E(t)$ in Fig.~\ref{fig:mEs} and the trajectories of $v_E(t)$ in supplementary Fig.~S2.

The next result focuses on the peak behavior of $m_E(t)$, which we show occurs at $t=0$ in networks where $\frac\nu\mu\le2$ but occurs later when $\frac\nu\mu>\mu$. In general, $\max\{\mu+\frac\nu\mu-2,\mu\}$ is an upper bound to $m_E(t)$, which $m_E(t)$ achieves in the limit as $\theta(0)\to1$, and the peak time of $m_E(t)$ always occurs before the superspreading peak time $t_m$ of $m(t)$.
\begin{theorem}
If $\frac\nu\mu\le2$, then in the limit as $\theta(0)\to1$, $m_E(t)$ will peak at time $t=0$ and at value $m_E(0)=\mu$. If $\frac\nu\mu>2$, then in the limit as $\theta(0)\to1$, $m_E(t)$ will peak at a time greater than 0 but less than the superspreading peak time $t_m$ of $m(t)$ and at the value $\mu+\frac\nu\mu-2$. Thus,
\begin{equation}
\lim\limits_{\theta(0)\to1}\max\limits_{t\ge 0} m_E(t)=\max\left\{\mu+\frac\nu\mu-2,\mu\right\}.
\end{equation}

The peak time of $m_E(t)$ is less than $t_m$ if $\theta(0)$ is sufficiently close to 1, and will occur at $t=0$ if $\frac\nu\mu\le2$. And the value of $m_E(t)$ at its peak is bounded above by
\begin{equation}
 \max\limits_{t\ge0}m_E(t)<\max\{\mu+\frac\nu\mu-2,\mu\}.
\label{eq:mE_peak}
\end{equation}
\label{thm:mE_peak}
\end{theorem}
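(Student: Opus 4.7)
The plan is to analyze the ODE Eq.~\eqref{eq:dmE} in the regime $\theta(t)\approx 1$, identify a quasi-steady state for $m_E$, and combine this with Lemma~\ref{lemma:J_I} and the already-established behavior of $m(t)$. In the limit $\theta(0)\to 1$, while $\theta(t)$ remains close to 1, the coefficients in Eq.~\eqref{eq:dmE} stabilize: $J/I\to\beta(\nu/\mu+\mu-2)$ by Lemma~\ref{lemma:J_I}(2), $\psi'(\theta)\to\mu$, $\psi''(\theta)\to\nu+\mu^2-\mu$, and $\psi''(\theta)\dot\theta/\psi'(\theta)\to 0$ since $\dot\theta=O(1-\theta)$. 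Setting $\dot m_E=0$ in the simplified ODE gives the quasi-steady state
\begin{equation*}
m_E^\ast \;\to\; \frac{\beta(\nu/\mu+\mu-2)\cdot(\nu/\mu+\mu-1)}{\beta(\nu/\mu+\mu-2)+\beta} \;=\; \mu+\frac{\nu}{\mu}-2,
\end{equation*}
with attraction rate $J/I+\beta\to\beta(\nu/\mu+\mu-1)>0$, so convergence is exponentially fast.

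For the case $\nu/\mu\le 2$, one has $m_E^\ast\le\mu\approx m_E(0)$, so a direct computation gives $\dot m_E(0)\le 0$ in the limit (and strictly $<0$ whenever $\theta(0)<1$), hence the peak occurs at $t=0$ with value $\psi'(\theta(0))\to\mu$. For the case $\nu/\mu>2$, one has $m_E^\ast>\mu$, so $\dot m_E(0)>0$ and $m_E$ climbs. By Lemma~\ref{lemma:J_I}(2), choosing $\theta(0)$ sufficiently close to 1 keeps $\theta(t)\approx 1$ for arbitrarily long, so the exponential approach brings $m_E$ within any $\epsilon$ of $\mu+\nu/\mu-2$; thereafter, as $\theta$ begins to decrease, $m_E^\ast$ itself drifts downward (since $J/I$ ultimately decays to $\max\{\beta(\psi''(\theta(\infty))/\psi'(1)-1),0\}$), dragging $m_E$ with it. Hence $m_E$ attains a peak whose value tends to $\mu+\nu/\mu-2$. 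The strict inequality Eq.~\eqref{eq:mE_peak} for finite $\theta(0)<1$ comes from the fact that $m_E$ never overshoots its moving quasi-steady target and the target itself satisfies $m_E^\ast(t)<\mu+\nu/\mu-2$ whenever $\theta(t)<1$; the $\nu/\mu\le 2$ branch of the bound is similarly immediate since $m_E(0)=\psi'(\theta(0))<\mu$.

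The most delicate piece is showing the peak time $t_E$ of $m_E$ satisfies $t_E<t_m$ when $\nu/\mu>2$ (the $\nu/\mu\le 2$ case is trivial from Theorem~\ref{thm:tm}). The strategy is to compare the two linear ODEs in the $\theta\approx 1$ regime: $m$ is attracted to the target $\mu+\nu/\mu$ at rate $J/I$, while $m_E$ is attracted to the lower target $\mu+\nu/\mu-2$ at the strictly larger rate $J/I+\beta$. Thus $m_E$ reaches (a neighborhood of) its target before $m$ reaches its own, and a uniform asymptotic expansion analogous to Eq.~\eqref{eq:dm_approx} but applied to $m_E$ shows that at time $t_E$, $m$ is still strictly below $\mu+\nu/\mu$ by a margin bounded away from zero, so $\dot m(t_E)>0$, giving $t_E<t_m$.

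The main obstacle will be making this last step quantitative: both $t_E$ and $t_m$ diverge as $\theta(0)\to 1$, so one must show the gap $t_m-t_E$ stays positive (indeed, bounded below by a positive constant depending on $\beta$, $\gamma$, $\mu$, $\nu$) rather than collapsing in the limit. This requires extracting the next-order behavior of both $m$ and $m_E$ around the moment when $\theta(t)$ departs from 1, for which the explicit solution of the approximately-linear ODEs (with time-varying forcing driven by $1-\theta(t)$) should suffice, paralleling the computation used to obtain Eq.~\eqref{eq:tm_limit} in Theorem~\ref{thm:tm}.
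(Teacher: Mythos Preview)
Your proposal is correct and takes essentially the same approach as the paper: both rewrite Eq.~\eqref{eq:dmE} to exhibit $m_E$ as being attracted to a moving target, bound that target strictly below $\mu+\nu/\mu-2$ using $\dot\theta<0$ and Lemma~\ref{lemma:J_I}(2), and then split on whether $\nu/\mu\le 2$ to decide if the initial value $m_E(0)\approx\mu$ already dominates the target. Your treatment of the peak-time comparison $t_E<t_m$ is in fact more detailed than the paper's own proof, which states this claim in the theorem but does not argue it explicitly; your sketch (comparing the attraction rates $J/I$ versus $J/I+\beta$ to conclude $m_E$ equilibrates before $m$) is the natural argument and is more than the paper provides.
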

\begin{proof}
Reformatting Eq.~\eqref{eq:dmE} as
\begin{equation}
\dot m_E=-\left(\frac JI+\beta-\frac{\psi''(\theta)}{\psi'(\theta)}\dot\theta\right)\left(m_E-\frac{J/I}{J/I+\beta-\frac{\psi''(\theta)}{\psi'(\theta)}\dot\theta}\frac{\psi''(\theta)\theta}{\psi'(1)}\right)
\label{eq:dmE2}
\end{equation}
shows that $m_E$ is constantly attracted towards the moving target $\frac{J/I}{J/I+\beta-\frac{\psi''(\theta)}{\psi'(\theta)}\dot\theta}\frac{\psi''(\theta)\theta}{\psi'(1)}$ at rate $\frac JI+\beta-\frac{\psi''(\theta)}{\psi'(\theta)}\dot\theta$. Since $\dot\theta<0$ and $\frac JI<\beta\left(\frac{\psi''(1)}{\psi'(1)}-1\right)$ by Lemma~\ref{lemma:J_I}, this moving target will always be bounded above by
\begin{equation}
\frac{J/I}{J/I+\beta-\frac{\psi''(\theta)}{\psi'(\theta)}\dot\theta}\frac{\psi''(\theta)\theta}{\psi'(1)}<\left(1-\frac{\beta}{J/I+\beta}\right)\frac{\psi''(1)}{\psi'(1)}<\frac{\psi''(1)}{\psi'(1)}-1=\mu+\frac\nu\mu-2.
\label{eq:mE_target}
\end{equation}
If $\frac\nu\mu\le2$, then the moving target will always to $\mu=\lim\limits_{\theta(0)\to1}m_E(0)$, and so in the limit as $\theta(0)\to1$, $m_E(t)$ will peak at $t=0$ at value $m_E(0)=\mu$.

Now assume that $\frac\nu\mu>2$. Lemma~\ref{lemma:J_I} gives us that $\frac JI$ approaches $\beta\left(\frac{\psi''(1)}{\psi'(1)}-1\right)$ from below at the beginning and can be made to stay arbitrarily close to this value for arbitrarily long while $\theta(t)\approx1$ by choosing $\theta(0)$ sufficiently close 1. Thus, in the limit as $\theta(0)\to 1$, when calculating the target $m_E(t)$ moves towards, we can consider $\theta\approx1$ and $\dot\theta\approx0$ as constants, and $\frac JI$ as being less than but approaching $\beta\left(\frac{\psi''(1)}{\psi'(1)}-1\right)$, so that Eq.~\eqref{eq:mE_target} will always be true but in the limit as $\theta(0)\to 1$ the inequalities in Eq.~\eqref{eq:mE_target} will tend toward equalities.
\end{proof}

This result is reflected in Fig.~\ref{fig:mEs}, which shows that $m_E(t)$ always stays under this upper bound of $\max\{\mu+\frac\nu\mu-2,\mu\}$, and in Fig.~\ref{fig:peak_ts}, which shows that the peak times of $m_E(t)$ are always less than the superspreading peak times $t_m$ (and consequently also bounded above by half the prevalence peak time $t_I$). Furthermore, in the case of the Poisson network where $\frac\nu\mu=1\le 2$, we see that $m_E(t)$ does indeed peak at $t=0$, while in the other two cases were $\frac\nu\mu>2$, we see that $m_E(t)$ peaks later and at a value approximately equal to $\mu+\frac\nu\mu-2$. This peak value of $\mu+\frac\nu\mu-2$ for $m_E(t)$ (when $\frac\nu\mu>2$) is significant, since this is 1 less then the value $\mu+\frac\nu\mu-1$ which is the mean ``excess degree'' of the network (for a node reach by following a random edge, the excess degree is the number of other edges that node has \cite{Newman2018}). In other words, non-initial infected nodes always have on average one less susceptible neighbor than they could at their maximum.

We now examine the behavior of $m_E(t)$ as $t\to\infty$, seeing that it settles at a positive value if and only if $\psi''(\theta(\infty))<\mu$ (which we note is also the same condition that governs the final behavior of $m(t)$), otherwise $m_E(t)$ vanishes to 0.
\begin{theorem}
\begin{equation}
\lim\limits_{t\to\infty}m_E(t)=\max\left\{\theta(\infty)\left(\frac{\psi''(\theta(\infty))}\mu-1\right),0\right\}<\frac\gamma\beta.
\label{eq:mE_infty}
\end{equation}
\label{thm:mE_infty}
\end{theorem}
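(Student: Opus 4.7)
The plan is to analyze the limit behavior of $m_E$ using the reformatted ODE~\eqref{eq:dmE2}, which expresses $m_E(t)$ as being attracted to a moving target
\[
T(t) = \frac{(J/I)\,\psi''(\theta)\theta/\mu}{r(t)}, \qquad r(t) = \frac{J}{I} + \beta - \frac{\psi''(\theta)}{\psi'(\theta)}\dot\theta,
\]
at rate $r(t)$. Since $\theta(t)$ is monotone and bounded, $\theta(t)\to\theta(\infty)$ and $\dot\theta(t)\to 0$, so the $\psi''(\theta)\dot\theta/\psi'(\theta)$ term in $r(t)$ vanishes. By Lemma~\ref{lemma:J_I}(3), $J/I \to L := \max\{\beta(\psi''(\theta(\infty))/\mu - 1),\,0\}$, and hence $r(t) \to \beta + L > 0$, staying uniformly bounded below by a positive constant for large $t$.

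Next I would split into the two cases matching~\eqref{eq:mE_infty}. If $\psi''(\theta(\infty)) > \mu$, plugging in $L = \beta(\psi''(\theta(\infty))/\mu - 1)$ gives $r(t) \to \beta\psi''(\theta(\infty))/\mu$, and a direct algebraic simplification yields $T(t) \to \theta(\infty)(\psi''(\theta(\infty))/\mu - 1)$. If $\psi''(\theta(\infty)) \leq \mu$, then $L = 0$, so $r(t) \to \beta$ and $T(t) \to 0$. In either case, the linear ODE $\dot m_E = -r(t)(m_E - T(t))$ with $r(t)$ eventually bounded below by a positive constant and $T(t)$ converging to a finite limit forces $m_E(t) \to \lim_{t\to\infty} T(t)$ by a routine Gr\"onwall-type estimate. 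Combining the two cases produces the $\max\{\cdot,0\}$ formula.

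The remaining task is the strict inequality $m_E(\infty) < \gamma/\beta$. It is immediate in the second case, so assume $\psi''(\theta(\infty)) > \mu$ and write $\theta^* = \theta(\infty)$. Using the fixed-point identity~\eqref{eq:theta_infty} in the rearranged form $\gamma\mu = \beta(\theta^*\mu - \psi'(\theta^*))/(1-\theta^*)$, the desired inequality reduces, after multiplying through by $(1-\theta^*)$, to
\[
g(\theta^*) := \theta^*\mu - \psi'(\theta^*) - \theta^*(1-\theta^*)(\psi''(\theta^*) - \mu) > 0.
\]
A direct computation gives $g(1) = 0$ and
\[
g'(x) = -(1-x)\bigl[2(\psi''(x) - \mu) + x\psi'''(x)\bigr].
\]
Since $\psi'''(x) \geq 0$ on $[0,1]$ (its power series has non-negative coefficients), $\psi''$ is non-decreasing, so $\psi''(x) \geq \psi''(\theta^*) > \mu$ on $[\theta^*, 1]$. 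The bracket is therefore strictly positive on $[\theta^*, 1)$, giving $g'(x) < 0$ there, hence $g(\theta^*) > g(1) = 0$.

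The principal obstacle is identifying the right auxiliary function $g$ whose monotonicity encodes the bound $\theta^*(\psi''(\theta^*)/\mu - 1) < \gamma/\beta$. Once $g$ is written down, its derivative factors cleanly and both factors have easily determined signs from the convexity properties of $\psi$ and the Case~A hypothesis. Everything else follows from Lemma~\ref{lemma:J_I} together with standard asymptotic analysis of a linear first-order ODE with converging coefficients.
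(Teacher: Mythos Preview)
Your argument for the limit $\lim_{t\to\infty} m_E(t)$ is essentially the same as the paper's: both read off the limiting target of the linear ODE~\eqref{eq:dmE2} using Lemma~\ref{lemma:J_I}(3) and $\dot\theta\to 0$, and both rely on the attraction rate being bounded below by a positive constant (the paper simply notes $r(t)\ge\beta$ for all $t$, slightly sharper than your eventual lower bound, but either suffices).

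Where you genuinely diverge is in proving the strict bound $\theta(\infty)\bigl(\psi''(\theta(\infty))/\mu - 1\bigr) < \gamma/\beta$. Your auxiliary function $g$ and its derivative computation are correct, and the monotonicity argument goes through. However, the paper's route is considerably shorter: since $\psi'$ is convex, the mean value theorem gives $\psi''(\theta(\infty)) < \dfrac{\psi'(1)-\psi'(\theta(\infty))}{1-\theta(\infty)}$, and the fixed-point relation~\eqref{eq:theta_infty} rewrites the right-hand side as exactly $\dfrac{\beta+\gamma}{\beta}\,\mu$. This yields $\psi''(\theta(\infty))/\mu - 1 < \gamma/\beta$ in one line, and multiplying by $\theta(\infty)<1$ finishes. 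Your approach buys nothing extra here; the ``principal obstacle'' you identify (finding the right $g$) is an artifact of not spotting the secant inequality. Both proofs ultimately hinge on the same convexity of $\psi'$, but the paper exploits it directly rather than through a differentiated auxiliary function.
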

\begin{proof}
Since the rate $\frac JI+\beta-\frac{\psi''(\theta)}{\psi'(\theta)}\dot\theta$ is always at least $\beta>0$, $m_E(t)$ will always reach the final value of its moving target $\frac{J/I}{J/I+\beta-\frac{\psi''(\theta)}{\psi'(\theta)}\dot\theta}\frac{\psi''(\theta)\theta}{\psi'(1)}$ in the limit as $t\to\infty$. Noting that $\lim\limits_{t\to\infty}\frac{J(t)}{I(t)}=\max\left\{\beta\left(\frac{\psi''(\theta(\infty))}{\psi'(1)}-1\right),0\right\}$ (Lemma~\ref{lemma:J_I}) and $\lim\limits_{t\to\infty}\dot\theta(t)=0$, and then simplifying the moving target accordingly, gives the desired result.

To see that $\theta(\infty)\left(\frac{\psi''(\theta(\infty))}\mu-1\right)<\frac\gamma\beta$, realize that $\psi''(\theta(\infty))<\frac{\psi'(1)-\psi'(\theta(\infty))}{1-\theta(\infty)}$ by the convexity of $\psi'$ so that
\begin{equation}
\frac{\beta}{\beta+\gamma}\frac{\psi''(\theta(\infty))}{\psi'(1)}<\frac{\beta}{\beta+\gamma}\frac{\psi'(1)-\psi'(\theta(\infty))}{\psi'(1)(1-\theta(\infty))}=1,
\label{eq:infty_less_than_1}
\end{equation}
the last step of which follows from Eq.~\eqref{eq:theta_infty}. The desired inequality than immediately follows.
\end{proof}
This result is demonstrated in Fig.~\ref{fig:mEs}, where only the negative binomial network with high dispersion satisfies $\psi''(\theta(\infty))>\mu$ (as shown by supplementary Fig.~S4), and this is also the only network for which $m_E(t)$ is seen to stay above 0 as $t\to\infty$.

\section*{The secondary case distribution \texorpdfstring{$Z(t)$}{Z(t)}}

Lastly, we introduce the \textit{secondary case distribution} $Z(t)$, which we define to be the distribution of secondary cases a node infected at time $t$ will produce over the course of its infectious period. $Z(t)$ most directly captures superspreading: what is really important is not that some people are connected more than others, or that they expose susceptible individuals more than others, but rather that they \textit{infect} more than others. This is often how superspreading is defined in the literature in more simple branching process models, as the distribution of secondary cases each infection causes \cite{LloydSmithEtAl2005}. However, such models typically ignore time-varying susceptibility and finite population sizes, thus making our formulation of the secondary case distribution on network epidemics novel. Although $Z(t)$ is in some sense the most natural definition of superspreading we consider, it is also the least analytically tractable, and we are unable to derive a simple differential equation formulation for even its mean $m_Z(t)$.

We also note that $Z(t)$ is defined based on newly infected nodes at time $t$ (thus excluding initially infected nodes), while the other distributions $X(t)$ and $E(t)$ are defined for all nodes currently infected at time $t$ (though in the supplement we also define and explore analog distributions to $X(t)$ and $E(t)$ which are defined for newly infected nodes at time $t$ instead). This difference is necessary for defining the secondary case distribution since it is concerned with all future transmissions an infected node causes. It thus makes sense to start tracking these transmissions as soon as the node becomes infected; the infected degree distribution and the effective degree distribution, however, are only concerned with the current state of an infected node and its neighbors at time $t$, which is important to track at different points of an infected node's infectious period. As we will show, this distinction has important consequences for the behavior of $Z(t)$.

\begin{figure}[!t]
\centering
\includegraphics[width=\textwidth]{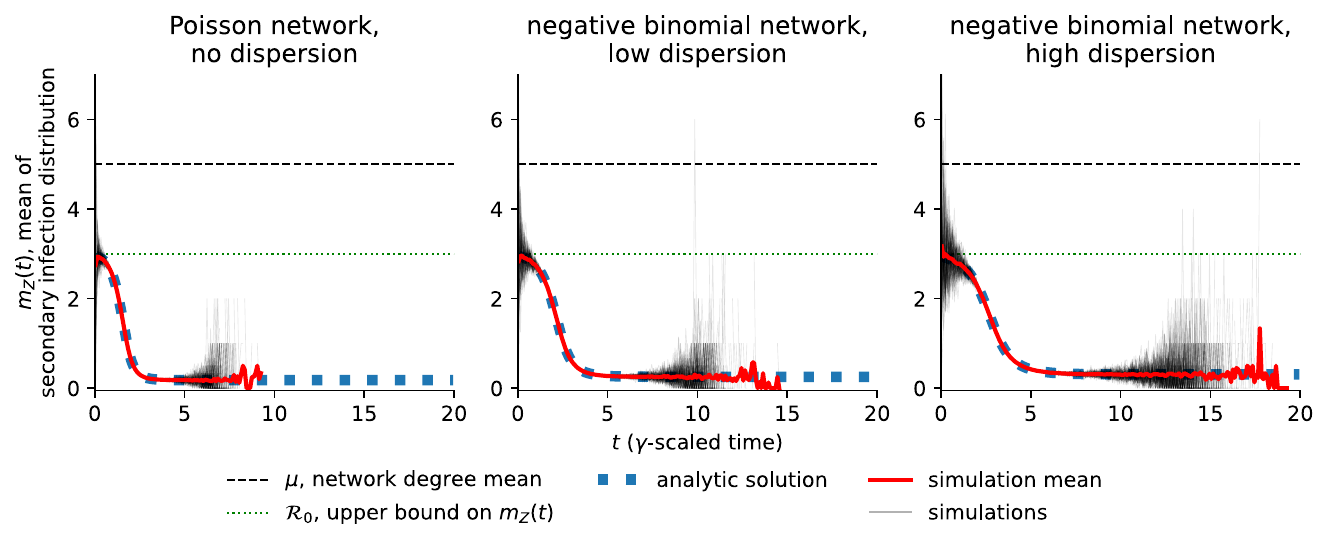}
\caption{Trajectories of the secondary case distribution's mean $m_Z(t)$ for three different network degree distributions. Red curves show the mean of 200 simulations (whose individual trajectories are also plotted by faint gray curves) while the blue dotted curves show the analytic solution from Eq.~\eqref{eq:mZ}. Black dashed lines show the mean $\mu$ of the network degree distribution and green dotted lines show the basic reproduction number $\RO$.}
\label{fig:mZs}
\end{figure}

In order to derive its mass function $p_z(t)$, we first define the quantity $\zeta_t^k(s)$, which is the probability that a node $x$ which becomes infected at time $t>0$ and recovers at time $s$ infects an arbitrary degree-$k$ neighbor $y$ (other than the neighbor which infected $x$) by time $s$. This can be calculated using Bayes' law as follows, noting that $y$ cannot have infected by $x$ by time $t$ and $x$ cannot have infected $y$ by time $t$:
 \begin{align*}
&\zeta_t^k(s)=\P\big(y\text{ susceptible at time }t\mid y\text{ never infected }x\text{ by time }t\big)\ \times\\
&\quad\int_t^s\P\big(x\text{ infects }y\text{ in time }(\tau,\tau+d\tau)\mid y\text{ not infected by any other neighbor by time }\tau\big)\ \times\\
&\hspace{9.5mm}\P\big(y\text{ not infected by another other neighbor by time }\tau\mid y\text{ susceptible at time t}\big)\\
&\phantom{\zeta_t^k(s)}=\frac{\theta(t)^{k-1}}{\theta(t)}\int_t^s\beta e^{-\beta(\tau-t)}\frac{\theta(\tau)^{k-1}}{\theta(t)^{k-1}}\ d\tau\\
&\phantom{\zeta_t^k(s)}=\frac1{\theta(t)}\int_t^s\beta e^{-\beta(\tau-t)}\theta(\tau)^{k-1}\ d\tau.
\end{align*}
From this, we then define $\zeta_t(s)$ to be the same quantity but with $y$ an arbitrary neighbor of $x$ of any degree, so that its degree $k$ is now distributed according to the neighbor degree distribution $\frac{P(k)k}{\mu}$. Then
\begin{equation}
\zeta_t(s)=\sum_{k=0}^\infty\frac{P(k)k}{\mu}\zeta_t^k(s)=\frac1{\theta(t)}\int_t^s\beta e^{-\beta(\tau-t)}\frac{\psi'(\theta(\tau))}{\psi'(1)}\ d\tau.
\label{eq:zeta}
\end{equation}

Now consider a node $x$ newly infected at time $t$ and with degree $j$. One of its $j$ neighbors must be the one that transmitted infection to it, so $x$ can transmit infection to a maximum of $j-1$ of its neighbors. And for each of these $j-1$ neighbors there is probability $\zeta_t(s)$ that $x$ will infect each neighbor, given that $x$ recovers at time $s$. Since the probability of $x$ recovering at a time in the range $(s,s+ds)$ is $\gamma e^{-\gamma(s-t)}ds$, then the probability that $x$ infects exactly $k$ of its neighbors during its infectious period is $\int_t^\infty\gamma e^{-\gamma(s-t)}\binom{j-1}{k}\zeta_t(s)^k(1-\zeta_t(s))^{j-1-k}ds$.

Now this $x$ newly infected at time $t$ will have degree $j$ with probability $\frac{J_j(t)}{J(t)}=\frac{j\theta(t)^{j-1}}{\psi'(\theta(t))}$, since $J(t)$ is the number of new infections at time $t$ while $J_j$ is the number of those which have degree $j$. Finally, this gives the probability a node newly infected at time $t$ will infect exactly $k$ neighbors during its infectious period as
\begin{equation}
p_{Z,k}(t)=\sum_{j=k+1}^\infty\frac{j\theta(t)^{j-1}}{\psi'(\theta(t))}\int_t^\infty\gamma e^{-\gamma(s-t)}\binom{j-1}{k}\zeta_t(s)^k(1-\zeta_t(s))^{j-1-k}ds.
\label{eq:pZk}
\end{equation}

With the mass function derived for the secondary case distribution now derived, we now show that its mean $m_Z(t)$ can be expressed more nicely, but still falls short of a simple differential equation formulation due to the presence of integrals from $t$ to infinity.
\begin{theorem}
The mean $m_Z(t)$ of the secondary case distribution $Z(t)$ satisfies
\begin{equation}
m_Z(t)=\frac{\psi''(\theta(t))}{\psi'(\theta(t))}\int_t^\infty\beta e^{-(\beta+\gamma)(\tau-t)}\frac{\psi'(\theta(\tau))}{\psi'(1)}\ d\tau,
\label{eq:mZ}
\end{equation}
and if the function $\psi'$ is log-convex, then the initial value of $m_Z$ satisfies
\begin{equation}
m_Z(0)<\RO.
\end{equation}
\label{thm:mZ}
\end{theorem}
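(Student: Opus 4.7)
The plan is to derive Eq.~\eqref{eq:mZ} by direct calculation of $m_Z(t)=\sum_{k=0}^\infty k\,p_{Z,k}(t)$ from the mass function \eqref{eq:pZk}, and then establish $m_Z(0)<\RO$ by exploiting log-convexity of $\psi'$ to control the prefactor $\psi''(\theta)/\psi'(\theta)$ and monotonicity of $\theta$ to control the remaining integral.

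First I would interchange the sums over $j$ and $k$ so that the $k$-sum can be performed first. For fixed $j$, the inner sum $\sum_{k=0}^{j-1} k\binom{j-1}{k}\zeta_t(s)^k(1-\zeta_t(s))^{j-1-k}$ is the mean of a Binomial$(j-1,\zeta_t(s))$ and equals $(j-1)\zeta_t(s)$. The remaining $j$-sum collapses via $\sum_j P(j)\,j(j-1)\theta(t)^{j-1}=\theta(t)\psi''(\theta(t))$, giving
\[
m_Z(t) \;=\; \frac{\theta(t)\psi''(\theta(t))}{\psi'(\theta(t))}\int_t^\infty \gamma e^{-\gamma(s-t)}\,\zeta_t(s)\,ds.
\]
Next I would substitute the definition of $\zeta_t(s)$ from \eqref{eq:zeta} and swap the order of the resulting double integral (Fubini, justified by nonnegativity and boundedness of the integrand). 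With $\tau$ running over $[t,\infty)$ outside and $s$ over $[\tau,\infty)$ inside, the inner integral evaluates to $\int_\tau^\infty\gamma e^{-\gamma(s-t)}ds=e^{-\gamma(\tau-t)}$, which combines with the factor $\beta e^{-\beta(\tau-t)}$ to produce the desired kernel $\beta e^{-(\beta+\gamma)(\tau-t)}$. The $1/\theta(t)$ prefactor from $\zeta_t$ cancels the $\theta(t)$ pulled out earlier, yielding \eqref{eq:mZ} exactly.

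For the second claim, note that log-convexity of $\psi'$ is equivalent to $\psi''/\psi'=(\log\psi')'$ being non-decreasing. Since $\theta(0)\le1$, this bounds the prefactor by $\psi''(\theta(0))/\psi'(\theta(0))\le\psi''(1)/\psi'(1)$. For the integral, $\theta(\tau)<1$ for all $\tau>0$ and $\psi'$ is strictly increasing, so $\psi'(\theta(\tau))/\psi'(1)<1$ on a set of positive measure, giving
\[
\int_0^\infty \beta e^{-(\beta+\gamma)\tau}\frac{\psi'(\theta(\tau))}{\psi'(1)}\,d\tau \;<\; \int_0^\infty \beta e^{-(\beta+\gamma)\tau}\,d\tau \;=\; \frac{\beta}{\beta+\gamma}.
\]
Multiplying these two bounds yields $m_Z(0)<\frac{\beta}{\beta+\gamma}\frac{\psi''(1)}{\psi'(1)}=\RO$, as desired.

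The main obstacle is really just the combinatorial bookkeeping in the first step: making sure the sum/integral interchanges are valid (which follows from nonnegativity plus the assumption that $\mu_3$ exists, so $\psi''$ stays finite) and correctly reconciling the several factors of $\theta(t)$ and $\psi'(\theta(t))$ that appear in both $p_{Z,k}$ and $\zeta_t$. The final inequality is a one-line monotonicity argument once the closed form \eqref{eq:mZ} is in hand.
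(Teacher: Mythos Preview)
Your proposal is correct and follows essentially the same approach as the paper: compute $m_Z(t)$ by recognizing the inner $k$-sum as a binomial mean $(j-1)\zeta_t(s)$, collapse the $j$-sum via $\psi''$, and then bound $m_Z(0)$ using that log-convexity gives $\psi''(\theta)/\psi'(\theta)\le\psi''(1)/\psi'(1)$ together with $\psi'(\theta(\tau))/\psi'(1)\le1$. Your write-up is in fact more explicit than the paper's, which simply asserts that ``the sums of the binomial terms simplify to the expectations of binomial random variables'' and does not spell out the Fubini swap you carry out to convert the double integral in $(s,\tau)$ into the single $\tau$-integral with kernel $\beta e^{-(\beta+\gamma)(\tau-t)}$.
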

\begin{proof}
The equation for the mean follows by plugging Eq.~\eqref{eq:pZk} into $m_Z(t)=\sum_{k=0}^\infty p_{E,k}(t)$, with the sums of the binomial terms simplifying to the expectations of binomial random variables. Now the condition that $\psi'$ is log-convex implies $\frac{\psi''(\theta(t))}{\psi'(\theta(t))}<\frac{\psi''(1)}{\psi'(1)}$ for all $t$, so that $m_Z(0)<\frac{\psi''(1)}{\psi'(1)}\int_0^\infty\beta e^{-(\beta+\gamma)\tau}\ d\tau=\frac{\beta}{\beta+\gamma}\frac{\psi''(1)}{\psi'(1)}=\RO$.
\end{proof}

We note that this log-convexity condition for $\psi'$ will only ever be met by network degree distributions with infinite support, but most distributions commonly used to model contact networks that have defined variance (precluding scale-free networks) have this log-convexity property for the derivative of their probability-generating function, including Poisson, negative binomial, and geometric distributions.

The higher moments of $Z(t)$ are even less tractable, but we provide a formula for the second moment $m_{Z,2}(t)$ in the supplement, from which the variance can be calculated as $v_Z(t)=m_{Z,2}(t)-m_Z(t)^2$. We show the trajectories of $m_Z(t)$ in Fig.~\ref{fig:mZs} and the trajectories of $v_Z(t)$ in supplementary Fig.~S3.

The peak behavior of $m_Z(t)$ is not as interesting as with $m(t)$ or $m_E(t)$. Again, assuming log-convexity of $\psi'$, it is easy to see from Eq.~\eqref{eq:mZ} that $m_Z(t)$ will always be decreasing:
\begin{theorem}
Assuming $\psi'$ is log-convex, $m_Z(t)$ will always be decreasing for all $t$.
\end{theorem}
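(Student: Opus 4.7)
The plan is to factor $m_Z(t)$ as a product of two time-dependent factors that are each (weakly or strictly) monotonic in the same direction, so that the product is strictly decreasing. Write
\[
m_Z(t) = f(\theta(t))\, g(t),\qquad f(\theta):=\frac{\psi''(\theta)}{\psi'(\theta)},\qquad g(t):=\int_t^\infty \beta e^{-(\beta+\gamma)(\tau-t)}\frac{\psi'(\theta(\tau))}{\psi'(1)}\,d\tau,
\]
and differentiate. Both $f(\theta(t))$ and $g(t)$ are positive on the relevant range (since for a non-degenerate degree distribution $\psi',\psi''>0$ on $(0,1]$, and $g$ is an integral of strictly positive integrands), so it suffices to show that $f(\theta(t))$ is non-increasing in $t$ and $g(t)$ is strictly decreasing in $t$.

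For the first factor, note that log-convexity of $\psi'$ is exactly the statement that $(\log\psi')'=\psi''/\psi'=f$ is non-decreasing in its argument. Since $\theta(t)$ is strictly decreasing in $t$ (Miller's result, cited just after Eq.~\eqref{eq:R0}), the composition $t\mapsto f(\theta(t))$ is non-increasing. For the second factor, change variables with $u=\tau-t$ to write
\[
g(t)=\int_0^\infty \beta e^{-(\beta+\gamma)u}\frac{\psi'(\theta(t+u))}{\psi'(1)}\,du.
\]
For each fixed $u\ge 0$, $\theta(t+u)$ is strictly decreasing in $t$, and $\psi'$ is strictly increasing on $(0,1)$ (since $\psi''>0$), so the integrand is strictly decreasing in $t$; integrating preserves the strict monotonicity, giving $\dot g(t)<0$.

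Combining these, $\dot m_Z(t)=\tfrac{d}{dt}[f(\theta(t))]\,g(t)+f(\theta(t))\,\dot g(t)$, where the first summand is $\le 0$ and the second is a product of a strictly positive quantity with a strictly negative one, hence $<0$; so $\dot m_Z(t)<0$ for all $t$. There is no real obstacle here; the only thing to be careful about is that the strict inequality comes from $\dot g<0$ rather than from $f(\theta(t))$, since log-convexity of $\psi'$ is only assumed weakly and $f(\theta(t))$ could in principle be constant (e.g., for geometric-type tails). The factorization is the whole idea — everything else is unpacking what log-convexity of $\psi'$ means and exploiting that $\theta$ and $\psi'$ are monotone.
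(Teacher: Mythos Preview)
Your proof is correct and follows essentially the same approach as the paper: factor $m_Z(t)$ into $\frac{\psi''(\theta(t))}{\psi'(\theta(t))}$ and the integral, then argue each factor is decreasing (the first by log-convexity of $\psi'$ together with $\theta$ decreasing, the second because $\psi'(\theta(\tau))$ decreases in $\tau$). Your version is more careful than the paper's one-line proof in distinguishing weak from strict monotonicity and in noting that the strict decrease comes from the integral factor, but the underlying idea is identical.
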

\begin{proof}
This follows from the facts that $\frac{\psi''(\theta(t))}{\psi'(\theta(t))}$ is decreasing, by the log-convexity of $\psi'$, and that $\psi'(\theta(\tau))$ is decreasing with $\tau$.
\end{proof}
This result makes intuitive sense if $m_Z(t)$ is compared to the notion of the ``effective reproduction number'' $\mathcal{R}(t)$, which is typically defined to be the average number of secondary cases produced by individuals infected at time $t$ over the course of their infectious period if population susceptibility were to stay constant from time $t$ onward. This last condition of susceptibility staying constant is the only difference conceptually between $m_Z(t)$ and $\mathcal{R}(t)$, and this should ensure that $m_Z(t)<\mathcal{R}(t)$ for all $t$. Thus, it makes sense that $m_Z(t)$ declines for all $t$ and $m_Z(0)<\RO$, since $\mathcal{R}(t)$ generally declines for all $t$ in epidemics without demographics or waning immunity and $\mathcal{R}(0)\approx\RO$ by definition.

As mentioned before, $Z(t)$ is fundamentally different from $X(t)$ and $E(t)$ in its focus on newly infected nodes at time $t$ rather than at all currently infected nodes at time $t$. This distinction is responsible for the difference in peak behavior between $m_Z(t)$ and the means $m(t)$ and $m_E(t)$ of the other distributions. At the start, most of the nodes $X(t)$ and $E(t)$ consider are initial infections, whose average degree ($\mu$) is less than the average degree of the newly infected nodes near the start ($\mu+\frac\nu\mu$). However, $Z(t)$ only considers these newly infected nodes, and its calculation never considers the initial infections.

Finally, we provide the behavior of $m_Z(t)$ as $t\to \infty$, which always converges to a positive final value that is less than 1:
\begin{theorem}
\begin{equation}
\lim_{t\to\infty}m_Z(t)=\frac{\beta}{\beta+\gamma}\frac{\psi''(\theta(\infty))}{\mu}<1.
\label{eq:mZ_infty}
\end{equation}
\label{thm:mZ_infty}
\end{theorem}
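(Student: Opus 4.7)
The plan is to take the $t\to\infty$ limit directly inside the integral representation Eq.~\eqref{eq:mZ}. After substituting $u=\tau-t$ to move the $t$-dependence into the integrand, I would write
\[
m_Z(t) = \frac{\psi''(\theta(t))}{\psi'(\theta(t))}\int_0^\infty \beta e^{-(\beta+\gamma)u}\,\frac{\psi'(\theta(t+u))}{\psi'(1)}\,du.
\]
Since $\theta(t)\to\theta(\infty)\in(0,1)$ monotonically (as shown by Miller \cite{Miller2011}), and since $\psi''/\psi'$ is continuous on $(0,1]$, the prefactor converges to $\psi''(\theta(\infty))/\psi'(\theta(\infty))$.

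Next I would pass the limit inside the integral by dominated convergence. The integrand is dominated by $\beta e^{-(\beta+\gamma)u}$ uniformly in $t$, because $\psi'$ is increasing and $\theta(t+u)\le 1$, so $\psi'(\theta(t+u))\le \psi'(1)$. The pointwise limit as $t\to\infty$ is $\beta e^{-(\beta+\gamma)u}\psi'(\theta(\infty))/\psi'(1)$, and a straightforward integration gives
\[
\lim_{t\to\infty}\int_0^\infty\beta e^{-(\beta+\gamma)u}\,\frac{\psi'(\theta(t+u))}{\psi'(1)}\,du = \frac{\beta}{\beta+\gamma}\cdot\frac{\psi'(\theta(\infty))}{\mu}.
\]
Multiplying by the limit of the prefactor yields $\frac{\beta}{\beta+\gamma}\frac{\psi''(\theta(\infty))}{\mu}$, the stated value.

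For the strict bound $<1$, no additional work is needed: Eq.~\eqref{eq:infty_less_than_1}, derived in the proof of Theorem~\ref{thm:mE_infty} from the convexity of $\psi'$ together with the fixed-point relation Eq.~\eqref{eq:theta_infty}, already gives exactly $\frac{\beta}{\beta+\gamma}\frac{\psi''(\theta(\infty))}{\psi'(1)}<1$, and $\psi'(1)=\mu$. The only subtlety in the whole argument is justifying the interchange of limit and integral, but the exponential factor supplies a uniform integrable envelope, so I do not anticipate any genuine obstacle.
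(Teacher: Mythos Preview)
Your proof is correct and follows essentially the same route as the paper: substitute $\theta(\infty)$ for $\theta(t)$ in Eq.~\eqref{eq:mZ}, evaluate the resulting exponential integral, and invoke Eq.~\eqref{eq:infty_less_than_1} for the strict inequality. The only difference is that you explicitly justify the limit interchange via dominated convergence, whereas the paper simply asserts the substitution.
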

\begin{proof}
In the limit as $t\to\infty$, we substitute $\theta(\infty)$ in for $\theta(t)$ in Eq.~\eqref{eq:mZ}, which gives $\lim\limits_{t\to\infty}m_Z(t)=\lim\limits_{t\to\infty}\frac{\psi''(\theta(\infty))}{\psi'(\theta(\infty))}\frac{\psi'(\theta(\infty))}{\psi'(1)}\int_t^\infty\beta e^{-(\beta+\gamma)(\tau-t)}d\tau$ and thus the desired result. And we have previously shown that this quantity is less than 1 in Eq.~\eqref{eq:infty_less_than_1}.
\end{proof}
This fact also aligns with the closely related effective reproduction number $\mathcal{R}$, which also always reaches a final value less than 1 at the end of a non-endemic epidemic.

\begin{figure}[!t]
\centering
\includegraphics[width=\textwidth]{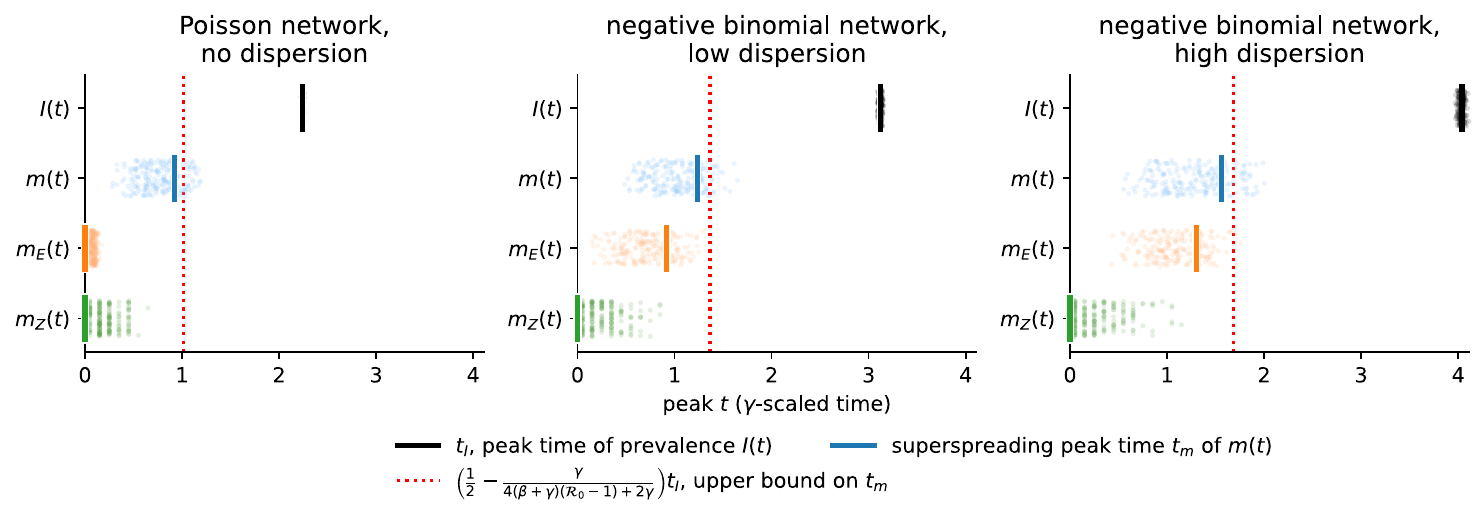}
\caption{Peak times of $I(t)$, $m(t)$, $m_E(t)$, and $m_Z(t)$ for three different network degree distributions, and from both simulation (faint points in background, binned into multiples of .1 for $m_Z(t)$) and analytic (solid lines) trajectories. The dashed red line shows the upper bound on the superspreading peak time $t_m$ of $m(t)$ given by Theorem~\ref{thm:tm}, demonstrating that all three superspreading metrics we consider peak sooner than half the time it takes for infection prevalence to peak. When measuring simulation peak times, we ignore the highly stochastic period at the end of the epidemics where there are very few infections and these metrics may have wild fluctuations (as visible in Figs.~\ref{fig:ms}, \ref{fig:mEs}, \ref{fig:mZs}).}
\label{fig:peak_ts}
\end{figure}

\begin{figure}[!t]
\centering
\includegraphics[width=\textwidth]{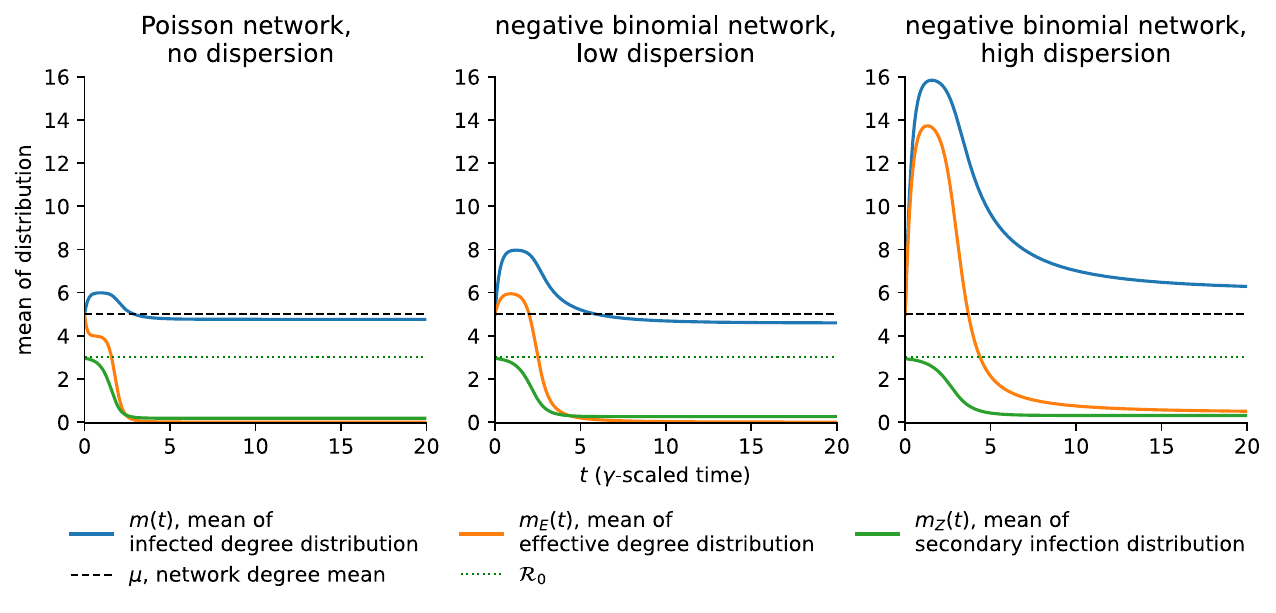}
\caption{Analytic trajectories of the three superspreading metrics---the mean $m(t)$ of the infected degree distribution, the mean $m_E(t)$ of the effective degree distribution, and the mean $m_Z(t)$ of the secondary case distribution---we define in Eq.~\eqref{eq:dm}, Eq.~\eqref{eq:dmE}, and Eq.~\eqref{eq:mZ}, respectively, and for three different network degree distributions. Black dashed lines show the mean $\mu$ of the network degree distribution and green dotted lines show the basic reproduction number $\RO$.}
\label{fig:means}
\end{figure}

\section*{Discussion}

While it has been known in both the theoretical and empirical literature that the potential for superspreading in an epidemic declines over time, we provide the first rigorous mathematical demonstration of this fact in a model of a network epidemic, specifically utilizing the edge-based framework of Miller and Volz \cite{Miller2011,MillerEtAl2012}. Depending on how the ``potential for superspreading'' is defined, however, can yield different qualitative results, as we show. We define three reasonable metrics for tracking superspreading over time: the mean $m(t)$ of the infected degree distribution, the mean $m_E(t)$ of the effective degree distribution, and the mean $m_Z(t)$ of the secondary case distribution (shown all together in Fig.~\ref{fig:means}). These metrics measure the expected number of contacts involving infectious individuals, expected exposures involving susceptible and infectious individuals, and expected transmissions; in order of increasing accordance with the definition of superspreading but also in order of decreasing analytic tractability. Yet, we are still able to show that these three metrics differ in their behaviors in crucial ways. Notably, while all three metrics do decline eventually, signaling a decreasing role of superspreading, $m(t)$ and $m_E(t)$ can have an initial increase and noticeable peaks. We show that the peak times of these metrics all occur in less than half the time it takes for population-level prevalence to peak in an epidemic, suggesting that the role of superspreading declines well before an epidemic reaches its most severe.

This result implies that the efficacy of contact-based control strategies \cite{NielsenEtAl2021,SneppenEtAl2021,KainEtAl2021, BoudreauEtAl2023} to minimize the role of superspreading are highly dependent on the timing of the intervention. Specifically, trying to control superspreading events via contact-based interventions may have little effect once the potential for superspreading has died down. And, as we have shown, this happens rather early compared to the overall dynamics of the epidemic, for all metrics we consider (Fig.~\ref{fig:peak_ts}). After infections have reached a considerable level in the population, it may be more effective to switch to other strategies that aim to uniformly curb transmission among all individuals rather than trying to target those with more contacts.

Our results also have important implications for the accuracy of methods that aim to measure and quantify superspreading. Most commonly, dispersion is measured by estimating the distribution of secondary cases caused by infected individuals: This can be done directly with transmission trees inferred from contact tracing or sequencing data \cite{AdamEtAl2020}, or by simulating epidemic models to reproduce observed incidence \cite{RiouAlthaus2020}, which can give similar results \cite{HebertDufresneEtAl2021}. These methods correspond to estimating the secondary case distribution, $Z(t)$. On the other hand, some methods instead attempt to measure dispersion by estimating the distribution of the total number of contacts infected individuals have (some of which may not lead to transmissions), through the use of mobility data \cite{LauEtAl2020}. This corresponds more to estimating the infected degree distribution $X(t)$. However, as we have shown, the temporal properties and magnitudes of $Z(t)$ and $X(t)$ can differ substantially. Thus, parameter inference may depend greatly on the distribution underlying the method, whether it is concerned with infected individuals' secondary transmissions or their total contacts.

Furthermore, we show that these distributions can change drastically over the course of an epidemic, especially in the early stages, as population susceptibility shifts from higher degree to lower degree nodes. Thus, regardless of the method used to infer dispersion, the data available is likely to involve a significant time period over which the importance of superspreading will have changed. Using fine-grained temporal data (on incidence or contact tracing) might limit the statistical power of the method, but coarse-graining the data involves averaging over very different superspreading patterns. Consequently, it is common to separate an epidemic in multiple periods, sometimes simply in half with two phases of rising or decreasing incidence \cite{KoEtAl2022}. In our simple framework, we show that the second half of the timeframe (and therefore more than half of available case data) will show significantly lower superspreading than what drove the early epidemic dynamics. Future work could use our framework to redefine epidemic phases based on the varying importance of superspreading over time. In doing so, we could redesign inference methods, forecasting models, and intervention strategies to better adapt to the time-varying statistical patterns of epidemics.

\section*{Code and data availability}

All code to run the analyses and produce the figures in this paper can be found at \url{https://github.com/freedmanari/infected_degrees}.

\section*{Acknowledgments}

A.S.F., M.M.N., and S.A.L. were supported by the National Science Foundation grant CCF1917819. A.S.F. acknowledges funding support from the National Science Foundation under grant DMS-2436120 to the University of Vermont. Funding for S.A.L. was provided by the National Science Foundation grant DMS-2327711. B.F.N. acknowledges financial support from the Carlsberg Foundation (grants CF23-0173 and CF24-1337). L.H.-D. acknowledges financial support from the National Institutes of Health 1P20 GM125498-01 Centers of Biomedical Research Excellence Award.

\section*{Competing Interests}

The authors have no competing interests to declare.

\clearpage
\singlespacing
\bibliographystyle{style}

\bibliography{references}

\end{document}